\documentclass[journal]{IEEEtran}

\usepackage{amsmath,amssymb,amsfonts, amsthm}
\usepackage{color}
\usepackage{algorithm}
\usepackage[noend]{algorithmic}
\usepackage{graphicx}
\usepackage{verbatim}
\usepackage{listings}	
\usepackage[caption=false,font=footnotesize]{subfig}	
\usepackage{setspace}
\usepackage[]{epsfig,color}
\usepackage[nice]{nicefrac}
\usepackage{latexsym}
\usepackage{fancyhdr}
\usepackage{colordvi,alltt}
\usepackage{multirow}
\usepackage{tabularx,threeparttable}
\usepackage{fancybox}
\usepackage{calc}
\usepackage{url}
\usepackage{times}
\usepackage[numbers]{natbib}
\usepackage{multicol}
\usepackage[bookmarks=true]{hyperref}
\usepackage{xspace}
\usepackage{breqn}

\author{Aviv~Adler, Mark~de~Berg, Dan~Halperin, and~Kiril~Solovey~
  \thanks{A.\ Adler is with the Electrical Engineering and Computer
    Science department at MIT, Massachusetts, USA. The work has been
    carried out in part during his visit to Tel Aviv University,
    enabled by the generous Melvin M.\ Goldberg Fellowship for
    Research in Israel.}  \thanks{M.\ de Berg is with the Department
    of Mathematics and Computing Science, TU Eindhoven, the
    Netherlands.}  \thanks{D.\ Halperin and K.\ Solovey are with the
    Blavatnik School of Computer Science, Tel-Aviv University,
    Israel. Their work has been supported in part by the 7th Framework
    Programme for Research of the European Commission, under FET-Open
    grant number 255827 (CGL---Computational Geometry Learning), by
    the Israel Science Foundation (grant no. 1102/11), by the
    German-Israeli Foundation (grant no. 1150-82.6/2011), and by the
    Hermann Minkowski--Minerva Center for Geometry at Tel Aviv
    University.}}


\newcommand{\ignore}[1]{}


\def\F{\mathcal{F}}

\def\L{\mathcal{L}}
\def\O{\mathcal{O}}

\def\E{\mathcal{E}}

\def\G{\mathcal{G}}

\def\T{\mathcal{T}}
\def\L{\mathcal{L}}

\def\V{\mathcal{V}}

\def\D{\mathcal{D}}
\def\W{\mathcal{W}}

\def\dD{\mathbb{D}}

\def\obs{\mathrm{obs}}

\newcommand{\defeq}{%
  \mathrel{\vbox{\offinterlineskip\ialign{%
    \hfil##\hfil\cr
    $\scriptscriptstyle\triangle$\cr
    $=$\cr
}}}}
\def\Int{\mathrm{Int}}

\def\Reals{\mathbb{R}}
\renewcommand{\leq}{\leqslant}
\renewcommand{\geq}{\geqslant}
\newcommand{\compl}{\mathrm{Compl}}



\newcommand{\Cpp}{C\raise.08ex\hbox{\tt ++}\xspace}




\newcommand{\pspace}{{\sc pspace}\xspace}
\newcommand{\np}{{\sc np}\xspace}



\newcommand{\etal}{{et~al.}\xspace}


\newboolean{ShowTODO}
\setboolean{ShowTODO}{true}

\ifthenelse{\boolean{ShowTODO}}{%

  \def\marrow{{\raggedright\footnotesize $\longleftarrow$}}

  \def\kiril#1{\textcolor{blue}{{\sc Kiril says: }{\marrow\sf #1}}}

  \def\aviv#1{\textcolor{red}{{\sc Aviv says: }{\marrow\sf #1}}}

  \def\markdb#1{\textcolor{red}{{\sc Mark says: }{\marrow\sf #1}}}

  \def\danny#1{\textcolor{red}{{\sc Danny says: }{\marrow\sf #1}}}

}{%

  \def\kiril#1{}

  \def\aviv#1{}

  \def\markdb#1{}

  \def\danny#1{}

}

\newtheorem{defin}{Definition}
  
\newtheorem{theorem}[defin]{Theorem}
  
\newtheorem{lemma}[defin]{Lemma}
  
\newtheorem{propo}[defin]{Proposition}
  
\newtheorem{coro}[defin]{Corollary}
  
\newtheorem{obse}{Observation}
  \newenvironment{observation}{\begin{obse} \sl}{\end{obse}}


\title{Efficient Multi-Robot Motion Planning for Unlabeled Discs in Simple
Polygons}

\begin{document}

\maketitle

\begin{abstract}
  We consider the following motion-planning problem: we are given $m$
  unit discs in a simple polygon with $n$ vertices, each at their own
  start position, and we want to move the discs to a given set of $m$
  target positions.  Contrary to the standard (labeled) version of the
  problem, each disc is allowed to be moved to any target position, as
  long as in the end every target position is occupied. We show that
  this unlabeled version of the problem can be solved in
  $O\left(n\log n+mn+m^2\right)$ time, assuming that the start and
  target positions are at least some minimal distance from each
  other. This is in sharp contrast to the standard (labeled) and more
  general multi-robot motion planning problem for discs moving in a
  simple polygon, which is known to be strongly \np-hard.
\end{abstract}




\section{Introduction} \label{sec:intro}

%
%
The \emph{multi-robot motion-planning problem} is to plan the motions
of several robots operating in a common workspace. In its most basic
form, the goal is to move each robot from its start position to some
designated target position, while avoiding collision with obstacles in
the environment and with other robots.  Besides its obvious relevance
to robotics, the problem has various other applications, for example
in the design of computer games or crowd simulation.  Multi-robot
motion planning is a natural extension of the single-robot motion
planning problem, but it is much more complex due to the high number
of \emph{degrees of freedom} that it entails, even when the individual
robots are as simple as discs.

\subsection{Related work}
One of the first occurrences of the multi-robot motion-planning
problem in the computational-geometry literature can be found in the
series of papers on the \emph{Piano Movers' Problem} by Schwartz and
Sharir.  They first considered the problem in a general
setting~\cite{ss-pm2} and then narrowed it down to the case of disc
robots moving amidst polygonal obstacles~\cite{ss-pm3}.  In the latter
work an algorithm was presented for the case of two and three robots,
with running time of $O(n^3)$ and $O(n^{13})$, respectively, where $n$
is the complexity of the workspace.  Later Yap~\cite{c-cms84} used the
\emph{retraction method} to develop a more efficient algorithm, which
runs in $O(n^2)$ and $O(n^3)$ time for the case of two and three
robots, respectively.  Several years afterwards, Sharir and
Sifrony~\cite{ss-cmp91} presented a general approach based on
\emph{cell decomposition}, which is capable of dealing with various
types of robot pairs and which has a running time of
$O(n^2)$. Moreover, several techniques that reduce the effective
number of degrees of freedom of the problem have been
proposed~\cite{avbsv-mpfmr,bslm-cppmr}.

When the number of robots is no longer a fixed constant, the
multi-robot motion-planning problem becomes hard.
Hopcroft~\etal~\cite{hss-cmpmio} showed that even in the relatively
simple setting of $n$~rectangular robots moving in a rectangular
workspace, the problem is already \pspace-hard. Moreover, Spirakis and
Yap~\cite{sy-snp84} showed that the problem is strongly \np-hard for
disc robots in a simple polygon.

In recent years, multi-robot planning has attracted a great deal of
attention from the robotics community. This can be mainly attributed
to two reasons. First, it is a problem of practical
importance. Second, the emergence of the \emph{sampling-based
  techniques}, which are relatively easy to implement, yet are highly
effective. These techniques attempt to capture the connectivity of the
configuration space through random sampling~\cite{kslo-prm, l-rert}.
Although sampling-based algorithms are usually incomplete---they are
not guaranteed to find a solution---they tend to be very efficient in
practice. Hence, they are considered the method-of-choice for
motion-planning problems that involve many degrees of freedom.  While
sampling-based tools for a single robot can be applied directly to the
multi-robot problem by considering the group of robots as one large
\emph{composite robot}~\cite{sl-upp}, there is a large body of work
that attempts to exploit the unique properties of the multi-robot
problem~\cite{hh-hmp,SHH12,ssh-fne13,so-cppmr,wc-cmpp11,wmc-ppp12}.

The aforementioned results deal exclusively with the classical
formulation of the multi-robot problem, where the robots are distinct
and every robot is assigned a specific target position.  The
\emph{unlabeled} variant of the problem, where all the robots are
assumed to be identical and thus interchangeable, was first considered
by Kloder and Hutchinson~\cite{kh-pppi05}, who devised a
sampling-based algorithm for the problem. Recently a generalization of
the unlabeled problem---the \emph{$k$-color motion-planning
  problem}---has been proposed, in which there are several groups of
interchangeable robots~\cite{sh-kcmr}.  Turpin~\etal~\cite{tmk-cap13}
considered a special setting of the unlabeled problem with disc
robots, namely where the collection of free configurations surrounding
every start or target position is star-shaped. This condition allows
them to devise an efficient algorithm that computes a solution in
which the maximum path length is minimized.  Unfortunately the
star-shapedness condition is quite restrictive, and in general it will
not be satisfied.

Other related work includes papers that study the number of moves
required to move a set of discs between two sets of positions in an
unbounded workspace, when a move consists of sliding a single
disc---see for example the paper by Bereg~\etal~\cite{bdp-sdp08} which
provides upper and lower bounds for the unlabeled case, or the paper
by Dumitrescu and Jiang~\cite{dj-rdp09} who show that deciding whether
a collection of labeled or unlabeled discs can be moved between two
sets of positions within $k$ steps is \np-hard.  Finally, we mention
the problem of \emph{pebble motions on graphs}, which can be
considered as a discrete variant of the multi-robot motion planning
problem. In this problem, pebbles need to be moved from one set of
vertices of a graph to another, while following a certain set of
rules---see for example~\cite{g-smg84, gh-mcpm, k-cpmg, klb-ftpp13,
  prst-mpg, yl-dofc12}.

\subsection{Our contribution}
Surprisingly, the unlabeled version of the multi-robot motion-planning
problem has hardly received any attention in the
computational-geometry literature.  Indeed, we don't know of any
papers that solve the problem in an exact and complete manner, except
in a restricted setting studied by Turpin~\etal~\cite{tmk-cap13} that
we mentioned above. We therefore study the following basic variant of
the problem: given $m$ unit discs in a simple polygon with
$n$~vertices, each at their own start position, and $m$~target
positions, find collision-free motions for the discs such that at the
end of the motions each disc occupies a target position.  We make the
additional assumption that the given start and target positions are
\emph{well-separated}. More precisely, any two of the given start and
target positions should be at distance at least~4 from each
other. Notice that we only assume this extra separation between the
robots in their static initial and goal placements; we do not assume
any extra separation (beyond non-collision) between a robot and the
obstacles, nor do we enforce any extra separation between the robots
during the motion.  Even this basic version of the problem turns out
to have a rich structure and poses several difficulties and
interesting questions.

By carefully examining the various properties of the problem we show
how to transform it into a discrete pebble-motion problem on graphs.
A solution to the pebble problem, which can be generated with rather
straightforward techniques, can then be transformed back into a
solution to our continuous motion-planning problem. We mention that a
similar transformation was used in~\cite{sh-kcmr} in the context of a
sampling-based method. Using this transformation we are able to devise
an efficient algorithm whose running time is
$O\left(n\log n+mn+m^2\right)$, where $m$ is the number of robots and
$n$ is the complexity of the workspace. To be precise, we show that
our algorithm runs in $O(n\log n+m^2)$ time, and the overall
description length of all the paths to be carried out by the robots
has complexity $O(mn+m^2)$.
As already mentioned, this is in sharp contrast to the standard
(labeled) and more general multi-robot motion planning problem for
discs moving in a simple polygon, which is known to be strongly
\np-hard~\cite{sy-snp84}.

\section{Preliminaries}\label{sec:prel} 
We consider the problem of $m$ indistinguishable unit-disc robots
moving in a simple polygonal workspace~$\W\subset \Reals^2$ with $n$
edges.  We define $\O \defeq \Reals^2\setminus \W$ to be the
complement of the workspace, and we call $\O$ the \emph{obstacle
  space}. Since our robots are discs, a placement of a robot is
uniquely specified by the location of its center. Hence, we will
sometimes refer to points $x\in \W$ as \emph{configurations}, and we
will say that a robot is \emph{at configuration~$x$} when its center
is placed at the point~$x\in W$.  For given $x \in \Reals^2$ and
$r \in \Reals_+$, we define $\D_r(x)$ to be the open disc of radius
$r$ centered at $x$.

We consider the unit-disc robots to be open sets. Thus a robot avoids
collision with the obstacle space if and only if its center is at
distance at least~1 from $\O$, that is, when it is at a configuration
located in the \emph{free space}
$\F \defeq \{x \in \Reals^2 : \D_1(x) \cap \O = \emptyset \}$.
We require the robots to avoid collisions with each other, so if a
robot is at configuration~$x$ then no other robot can be at a
configuration $y\in \Int(\D_2(x))$; here $\Int(X)$ denotes the
interior of the set~$X$.  Furthermore, the notation $\partial (X)$
will be used to refer the boundary of $X$.  We call $\D_2(x)$ the
\emph{collision disc} of the configuration~$x$.

Besides the simple polygon $\W$ forming the workspace, we are also
given sets $S = \{s_1, s_2, ..., s_m\}$ and
$T = \{t_1, t_2, ..., t_m\}$, such that $S, T \subset \F$. These are
respectively the sets of \emph{start} and \emph{target} configurations
of our $m$ identical disc robots. We assume that the configurations in
$S$ and $T$ are \emph{well-separated}:
\begin{quotation}
  For any two distinct configurations $x, y \in S \cup T$ we have
  $\|x - y\| \geq 4$.
\end{quotation}
The problem is now to plan a collision-free motion for our $m$
unit-disc robots such that each of them starts at a configuration in
$S$ and ends at a configuration in $T$.  Since the robots are
indistinguishable (or: \emph{unlabeled}), it does not matter which
robot ends up at which target configuration. Formally, we wish to find
paths $\pi_i:[0,1]\rightarrow \F$, for $1\leq i\leq m$, such that
$\pi_i(0)=s_i$ and $\bigcup_{i=1}^m\pi_i(1)=T$.  Additionally, we
require that the robots do not collide with each other: for every
$1\leq i\neq j\leq m$ and every $\xi\in (0,1)$, we require
$\D_1(\pi_i(\xi))\cap \D_1(\pi_j(\xi))=\emptyset$. Note that the
requirement that the robots do not collide with the obstacle
space~$\O$ is implied by the paths $\pi_i$ being inside the free
space~$\F$.

\section{Basic properties of the free space}\label{sec:b_lemmas}
Recall that the free space $\F\subset \W$ is the set of configurations
at which a robot does not collide with the obstacle space. The free
space may consist of multiple connected components. We denote these
components by $F_1,\ldots,F_q$, where $q$ is the total number of
components.  For any $i \in \{1, 2, ..., q\}$, we let
$S_i \defeq S \cap F_i$ and $T_i \defeq T \cap F_i$.  We assume from
now on that $|S_i| = |T_i|$ for all $1\leq i\leq q$---if this is not
the case, then the problem instance obviously has no solution---and we
define $m_i \defeq |S_i| = |T_i|$ to be the number of robots in~$F_i$.

Before we proceed, we need one more piece of notation.  For any
$x \in \W$, we define $\obs(x)$, the \emph{obstacle set} of $x$, as
$\obs(x) \defeq \{y \in \O : \|x - y\| < 1\}$. In other words,
$\obs(x)$ contains the points in the obstacle space overlapping with
$\D_1(x)$.  Note that $\obs(x) = \emptyset$ for $x \in \F$.

In the remainder of this section we prove several crucial properties
of the free space, which will allow us to transform our problem to a
discrete pebble problem. We start with some properties of individual
components $F_i$, and then consider the interaction between robots in
different components.

\subsection{Properties of a single connected component of $\F$}
We start with a simple observation, for which we provide a proof for
completeness.
\begin{lemma}\label{lem:shape of F_i} 
  Each component $F_i$ is simply connected.
\end{lemma}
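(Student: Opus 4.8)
The plan is to exploit the fact that $\F$ is an \emph{erosion} of the polygon $\W$ by a unit disc, so that its complement is a \emph{Minkowski sum}: directly from the definition of $\F$ (and the equivalences $\|x-y\|<1 \iff x\in\D_1(y) \iff y\in\D_1(x)$) one gets $\Reals^2\setminus\F=\bigcup_{y\in\O}\D_1(y)$. First I would record two easy consequences. Since $\W$ is a simple polygon, $\O=\Reals^2\setminus\W$ is the exterior region of a Jordan curve, hence connected and unbounded; and $\bigcup_{y\in\O}\D_1(y)$ is the image of the connected set $\O\times\D_1(\mathbf{0})$ under the (continuous) addition map, so $\Reals^2\setminus\F$ is connected and unbounded as well.

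Next I would rule out holes in $F_i$. Suppose, for contradiction, that $F_i$ is not simply connected; since $F_i$ is a closed planar region (a component of the closed set $\F$), this means $\Reals^2\setminus F_i$ has a bounded component $\mathcal{H}$, whose boundary satisfies $\partial\mathcal{H}\subseteq F_i\subseteq\F$. I claim $\mathcal{H}$ contains a point of $\Reals^2\setminus\F$: otherwise $\overline{\mathcal{H}}\subseteq\F$ is connected and meets $F_i$ (along $\partial\mathcal{H}$), so $\overline{\mathcal{H}}$ lies in the single component $F_i$ of $\F$, contradicting $\mathcal{H}\cap F_i=\emptyset$. Pick such a point $w\in\mathcal{H}$ with $w\notin\F$. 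Because $\Reals^2\setminus\F$ is connected and unbounded, there is a path from $w$ to a point outside the bounded set $\overline{\mathcal{H}}$ that stays entirely within $\Reals^2\setminus\F$; this path must cross $\partial\mathcal{H}$, yet $\partial\mathcal{H}\subseteq\F$ is disjoint from it --- a contradiction. Hence $\Reals^2\setminus F_i$ is connected, and a connected planar region with connected complement is simply connected.

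The step I expect to need the most care is the claim that the hole $\mathcal{H}$ genuinely contains a non-free point rather than being ``filled'' by some other free component $F_j$; the argument above handles it, but it leans on $F_i$ being \emph{closed} so that $\partial\mathcal{H}\subseteq F_i$, and (implicitly) on $\F$ having finitely many components, which is clear here since $\F$ is a bounded region whose boundary consists of segments and circular arcs. The final passage from ``$\Reals^2\setminus F_i$ is connected'' to ``$F_i$ is simply connected'' is a standard fact of planar topology (equivalently, by Alexander duality, $H^1(F_i)\cong\widetilde{H}_0(\Reals^2\setminus F_i)=0$ for such a well-behaved set). Everything else is elementary, which fits the lemma being phrased as a simple observation.
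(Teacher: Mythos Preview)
Your argument is correct and rests on the same core observation as the paper's: the complement $\Reals^2\setminus\F$ is connected, so no component $F_i$ can enclose a hole. The paper establishes connectedness by a one-line segment argument---if $x\notin\F$ pick $y\in\O$ with $\|x-y\|<1$, then every point of $\overline{xy}$ is also within distance~1 of $y$, hence $\overline{xy}\subset\Reals^2\setminus\F$---rather than via the Minkowski-sum/continuous-image phrasing, and it then stops, asserting without further comment that a hole in $F_i$ forces $\Reals^2\setminus\F$ to be disconnected. You were right to flag that this last implication needs the extra step ruling out the hole being filled by some other $F_j$; your closure-based argument handles it cleanly, and in that sense your proof is more complete than the paper's. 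On the other hand, invoking Alexander duality at the end is heavier than necessary for a lemma pitched as a simple observation; the paper's entirely elementary tone is closer to what the statement warrants.
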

\begin{proof}
  Suppose for a contradiction that $F_i$ contains a hole.  Then
  $\compl(\F) \defeq \Reals^2\setminus \F$, the complement of the free
  space, has multiple connected components. One of these is $C_{\O}$,
  the unbounded component containing~$\O$. Let $C$ be another
  component of $\compl(\F)$, and let $x\in C$.  Since $x\not\in \F$,
  there is a point $y\in\O$ with $\|x - y\| < 1$. But then
  $\|x' - y\| < 1$ for any point $x'$ on the segment $\overline{xy}$,
  which implies $\overline{xy}\subset \compl(\F)$ and thus contradicts
  that $C$ and $C_{\O}$ are different components.
\end{proof}
Now consider any $x$ in $\mathbb{R}^2$. Recall that $\D_2(x)$ denotes
the collision disc of~$x$, that is, $\D_2(x)$ is the set of all
configurations $y$ for which another robot placed at~$y$ collides with
a robot at~$x$. We now define $D^*(x)$ to be the part of $\D_2(x)$
that is in the same free-space component as~$x$, that is,
$D^*(x) \defeq \D_2(x)\cap F_i$ where $F_i$ is the free-space
component such that~$x\in F_i$.

The following three lemmas constitute the theoretical basis on which
the correctness and efficiency of our algorithm relies.

\begin{lemma} \label{lem: component self interference} For any
  $x\in \F$, the set $D^*(x)$ is connected.
\end{lemma}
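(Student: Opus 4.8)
The plan is to show that $D^*(x) = \D_2(x) \cap F_i$ is connected by a radial/deformation argument within the component $F_i$. Since $\D_2(x)$ is convex and $x \in \F \subseteq F_i$, the natural approach is to take any point $y \in D^*(x)$ and show that the straight segment $\overline{xy}$ is entirely contained in $D^*(x)$; this would establish that $D^*(x)$ is star-shaped with respect to $x$, hence connected. The segment lies in $\D_2(x)$ automatically by convexity of the disc, so the only thing to verify is that $\overline{xy} \subseteq F_i$, i.e. that the segment stays in the free space (and in the same component, which is then automatic since the segment is connected and contains $x$).

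The key step is therefore the following claim: for $x \in \F$ and $y \in \F$ with $\|x - y\| < 2$, if the segment $\overline{xy}$ leaves $\F$, we derive a contradiction. Suppose some point $z \in \overline{xy}$ has $z \notin \F$; then there is an obstacle point $p \in \O$ with $\|z - p\| < 1$. I would compare this with the distances $\|x - p\|$ and $\|y - p\|$, which are both $\geq 1$ since $x, y \in \F$. The geometric content is that the function $w \mapsto \|w - p\|$ restricted to the segment $\overline{xy}$ is convex, so it cannot dip strictly below $1$ at an interior point while being $\geq 1$ at both endpoints — unless the segment is long enough, but here $\|x-y\| < 2$ is exactly the constraint that should rule this out. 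Concretely: by convexity of $\|\cdot - p\|$ along the segment, $\min_{z \in \overline{xy}} \|z - p\| = \min(\|x-p\|, \|y-p\|) \geq 1$ if the foot of the perpendicular from $p$ to the line through $x,y$ falls outside the segment; and if it falls inside the segment, I need the separation $\|x-y\| < 2$ together with $\|x-p\|, \|y-p\| \geq 1$ to force the perpendicular distance to be $\geq$ something that still exceeds...

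Here is the subtlety, and I expect it to be the main obstacle: a pure convexity argument is *not* enough, because two points at mutual distance just under $2$, each at distance exactly $1$ from $p$, can have their connecting segment pass arbitrarily close to $p$ — in fact the midpoint of such a segment can be at distance near $0$ from $p$. So the straight-line star-shapedness claim as stated is simply false in general, and the proof cannot be that naive. The resolution must use that $y \in D^*(x)$ means $y \in F_i$, the *same component* as $x$, and that $F_i$ is simply connected (Lemma~\ref{lem:shape of F_i}): the correct argument should replace the straight segment by a path within $F_i$ and argue that $D^*(x)$ cannot be split into two pieces by $\O$ without $\O$ (or rather $\compl(\F)$) separating $\D_2(x)$ in a way incompatible with simple connectivity. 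So my refined plan is: (i) assume $D^*(x)$ is disconnected, with $x$ in one piece and some point in another; (ii) since $F_i$ is path-connected, join them by a path $\gamma$ in $F_i$ which must exit $\D_2(x)$; (iii) use that $\gamma$ leaves and possibly re-enters $\D_2(x)$, and build from $\gamma$ together with chords of $\partial \D_2(x)$ a closed curve in $F_i$ enclosing a point of $\compl(\F)$ — contradicting that $F_i$ is simply connected. The technical heart is showing that the obstacle component forcing the disconnection must lie "inside" $\D_2(x)$ in a topological sense, and I would carry out (iii) by a careful Jordan-curve argument on $\partial \D_2(x)$, tracking where $\gamma$ crosses the circle of radius $2$.
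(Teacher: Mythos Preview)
Your diagnosis of why straight-line star-shapedness fails is exactly right, and the high-level plan (assume disconnection, take a path in $F_i$, build a closed curve, appeal to simple connectivity) has the correct shape. But step~(iii) as you describe it has a genuine gap.

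You propose to close the curve using $\gamma$ together with chords or arcs of $\partial \D_2(x)$ and then contradict the simple connectivity of $F_i$. The problem is that those arcs of $\partial \D_2(x)$ need not lie in $F_i$: the whole point of the disconnection is that $\compl(F_i)$ intrudes into $\D_2(x)$, and it will typically cross $\partial \D_2(x)$. So you cannot in general manufacture a closed curve \emph{in $F_i$} this way, and simple connectivity of $F_i$ by itself is not enough --- a simply-connected $U$-shaped region can meet a round disc in two disconnected arms. What is special here is the ratio between the radii $1$ and~$2$, and your outline never uses it.

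The paper's proof closes the curve differently and appeals to the simple connectivity of $\W$, not of $F_i$. It uses the straight segment $\overline{xy}$, which lies in $\W$ (every point on it is within distance~$1$ of $x$ or $y$, hence in $\D_1(x)\cup\D_1(y)\subset\W$) even though it need not lie in $\F$; this segment together with a path $\pi\subset F_i$ bounds a region $A\subset\W$. The decisive geometric step --- the one your outline is missing --- is then to pass a \emph{unit} circle $K$ through suitably refined endpoints $x',y'\in\overline{xy}$, with its centre on the far side of $\overline{x'y'}$ from~$A$. Because any obstacle point witnessing a configuration in $A\setminus\F$ must be at distance $\geq 1$ from both $x'$ and $y'$ (they lie in $\F$) yet reachable without crossing $\pi'\subset\F$, the non-free part $A^*=A\setminus\F$ is trapped between $\overline{x'y'}$ and the short arc of $K$, hence inside $\Int(\D_2(x))$. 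That confinement lets one reroute from $x'$ to $y'$ along $\partial A^*$ while staying in $\D_2(x)\cap F_i$, contradicting the assumed disconnection. So the missing ingredient is this unit-circle ``fence'', which is exactly where the factor~$2$ between the collision radius and the robot radius enters the argument.
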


\begin{proof}
  Assume for a contradiction that $D^*(x)$ is not connected.  Let
  $F_i$ be the free-space component containing~$x$. Since by
  definition $x \in D^*(x)$, we can find some $y \in D^*(x)$ that is
  in a different connected component of $D^*(x)$ from $x$. Since
  $y \in D^*(x) \subset \D_2(x)$, the distance between $x$ and $y$ is
  at most~$2$. Hence, any point on the line segment $\overline{xy}$ is
  within a distance of $1$ of either $x$ or $y$. Since $x, y \in F_i$,
  we know that $\overline{xy} \subset \W$, otherwise either $x$ or $y$
  would not be in $\F$. We also know that
  $\overline{xy} \not \subset F_i$, since otherwise $x$ and $y$ would
  not be in different connected components of $D^*(x)$. Because
  $x, y \in F_i$, by definition there exists a simple path
  $\pi \subset F_i$ from $x$ to $y$. Since the workspace is a polygon
  with finite description complexity, we may assume that $\pi$ has
  finite complexity as well, which implies that
  $\pi \cap \overline{xy}$ is composed of finitely many isolated
  points and closed segments. See Figure~\ref{fig:lemmas}~(a) for an
  illustration.

  We now define $x', y'$ as the points on
  $\pi \cap \overline{xy} \subset D^*(x)$ such that $x', y'$ are in
  different connected components of $D^*(x)$ and $\|x' - y'\|$ is
  minimized given the first condition. Let $\pi'$ be the subpath of
  $\pi$ joining $x'$ to $y'$.  Notice that
  $\pi \cap \overline{x'y'} = \{x',y'\}$. Indeed, if there exists a
  point~$z \in \pi \cap \Int(\overline{x'y'})$, then $z$ must be in a
  different connected component of $D^*(x)$ than either $x'$ or $y'$,
  and $\|x' - y'\|$ would not be the minimum. Since $\pi$ is a simple
  path, this means that $\lambda\defeq\pi' \cup \overline{x'y'}$ is a
  simple closed curve.  The area enclosed by $\lambda$ (including
  $\lambda$) will be referred to as $A$. We note that
  $\lambda \subset \W$ since $\pi' \subset \F \subset \W$ and
  $\overline{x'y'} \subset \overline{xy} \subset \W$. This immediately
  implies that $A \subset \W$, since $\W$ is a simple polygon.

  Let $A^* \defeq A \backslash \F$. We claim that
  $A^*\subset \Int(\D_2(x))$, which implies that there exists a path
  in $F_i$ from $x'$ to $y'$ that goes along $\partial(A^*)$ and is
  fully contained in $\D_2(x)$. But this contradicts that $x'$ and
  $y'$ are in different components of $D^*(x)$ and, hence, proves the
  lemma. It thus remains to prove the claim that
  $A^*\subset \Int(\D_2(x))$.

  Note that for any point $z \in A^*$ and any $w \in \obs(z)$ we have
  $\overline{zw} \cap \pi' = \emptyset$, since $\pi' \subset \F$.
  Furthermore, for any $v \in \pi'$ we have $\|w - v\| \geq 1$, and as
  $x',y'\in \pi'$ it follows that $\|w-x'\|\geq 1,\|w-y'\|\geq 1$.
  Assume without loss of generality that $\overline{x'y'}$ is vertical
  and that locally $A$ lies to the right of $\overline{x'y'}$, as in
  Figure~\ref{fig:lemmas}~(a).  Let $K$ be the circle of radius $1$
  that passes through $x'$ and $y'$, and whose center lies to the left
  of $\overline{x'y'}$---such a circle always exists since
  $\|x' - y'\| \leq \|x - y\| \leq 2$. (If $\|x'-y'\| = 2$ then the
  center of the circle lies on $\overline{x'y'}$.) Let $\zeta$ be the
  arc of this circle lying to the right of $\overline{x'y'}$; note
  that this is the shorter of the two arcs joining $x'$ and $y'$ if
  they are of different lengths. Then $A^*$ is a contained entirely
  within the area enclosed by $\zeta$ and $\overline{x'y'}$.
  Furthermore, $A^* \subset \Int(A) \cup \Int(\overline{x'y'})$ since
  $\pi' \subset \F$. Therefore, since $\overline{x'y'}$ is a
  subsegment of $\overline{xy}$ and $\zeta$ cannot cross
  $\partial(\D_2(x))$, it follows that
  \[
  A^* \ \ \subset \ \ (\Int(A) \cup \Int(\overline{x'y'})) \cap
  \Int(\D_2(x)) \ \ \subset \ \ \Int(\D_2(x)),
  \]
  which finishes the proof of the lemma.
\end{proof}

\begin{figure}[bh!]
  \centering
  \subfloat[ ]{\includegraphics[width=0.8\columnwidth] {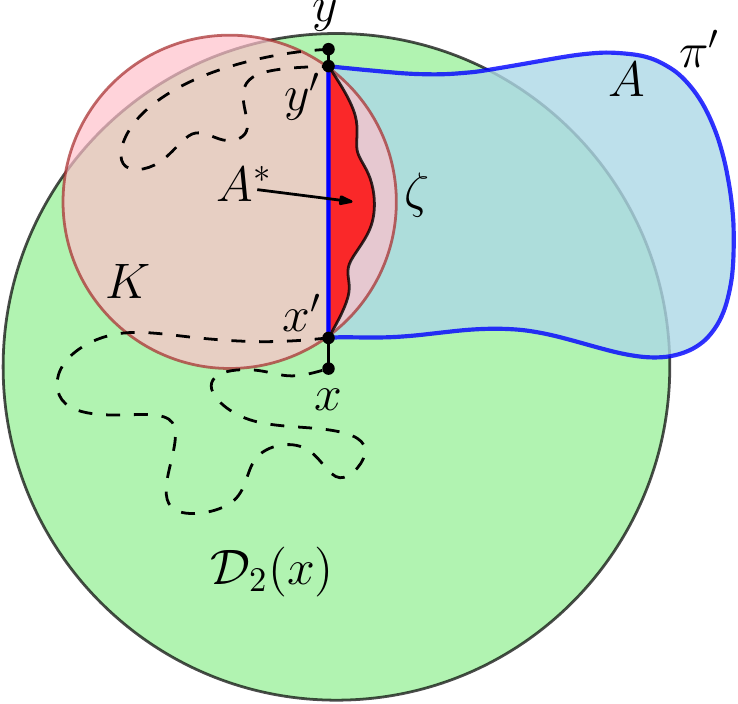}} \\
  \subfloat[ ]{\includegraphics[width=0.8\columnwidth] {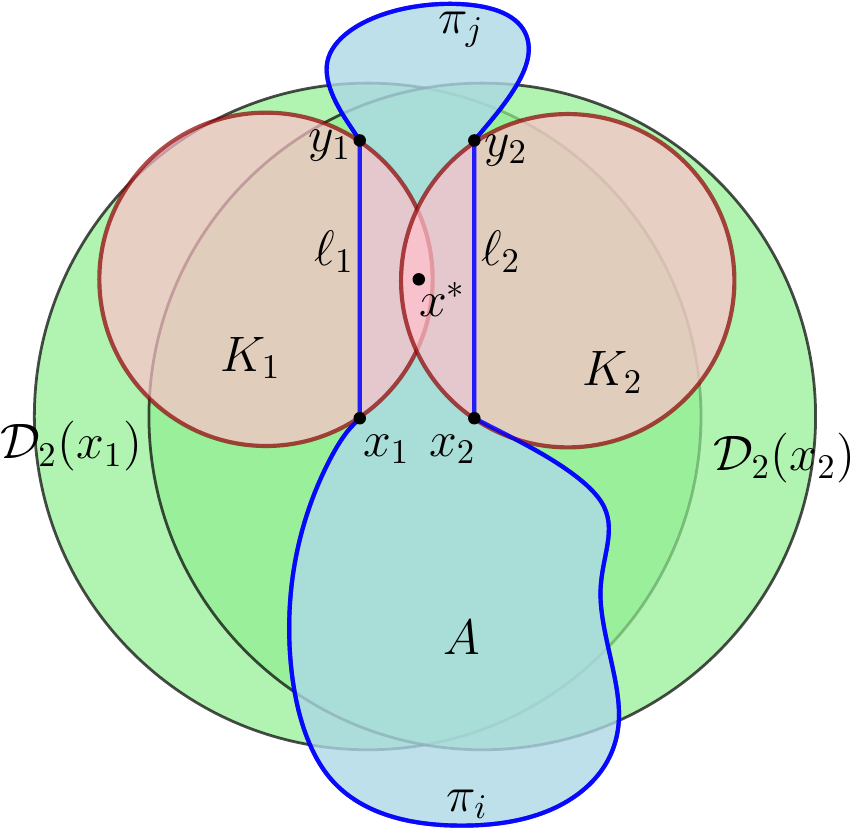}}
  \caption{ (a) An illustration of Lemma~\ref{lem: component self
      interference} . The disc $\D_2(x)$ is drawn in green. The closed
    curve $\lambda$, which consists of the curve $\pi'$ and the
    straight-line $\overline{x'y'}$, is drawn in blue, and $A$
    represents the area that is bounded by $\lambda$. The disc $K$ of
    radius $1$ that touches $x',y'$ is drawn in pink. Note that the
    area $A^*$, which is drawn in red, is contained in $A$. The dashed
    black lines represent $\pi\setminus \pi'$. (b) An illustration of
    Lemma~\ref{lem: no double interference}, and in particular, the
    case where $A^*_1\cap A^*_2\neq \emptyset$. For simplicity of
    presentation, we assume that $\ell_1=\overline{x_1y_1}$ and
    $\ell_2=\overline{x_2y_2}$.}
  \label{fig:lemmas}
  \vspace{-15pt}
\end{figure}

\subsection{Interference between different connected components of $\F$.}
Let $F_i, F_j$ be two distinct components of $\F$, and let $x \in F_i$
be such that $\D_2(x) \cap F_j \neq \emptyset$. We then call $x$ an
\emph{interference configuration from $F_i$ to $F_j$}, and define the
\emph{interference set from $F_i$ to $F_j$} as
$I_{(i,j)} \defeq \{x \in F_i : \D_2(x) \cap F_j \neq \emptyset\}$.
We also define the \emph{mutual interference set} of $F_i, F_j$ as
$I_{\{i,j\}} \defeq I_{(i,j)} \cup I_{(j,i)}$.  Intuitively, an
interference configuration from $F_i$ to $F_j$ is a configuration for
a robot in $F_i$ which could block a path in $F_j$, and the
interference set is the set of all such points. The mutual
interference set of $F_i, F_j$ is the set of all single-robot
configurations in either component which might block a valid
single-robot path in the other component.
\begin{lemma}\label{lem: no double interference} 
  For any mutual interference set $I_{\{i,j\}}$ and any two
  configurations $x_1, x_2 \in I_{\{i,j\}}$ \ we have
  $\D_2(x_1) \cap \D_2(x_2) \neq \emptyset$.
\end{lemma}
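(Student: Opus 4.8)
The plan is to establish the equivalent inequality $\|x_1-x_2\|<4$. After swapping the roles of $i$ and $j$ if necessary, assume $x_1\in F_i$; then there is a point $y_1\in F_j$ with $\|x_1-y_1\|<2$, and $x_2$ likewise lies in one of $F_i,F_j$ and has a partner $y_2$ in the other with $\|x_2-y_2\|<2$. As in the proof of Lemma~\ref{lem: component self interference}, the segment $\ell_k:=\overline{x_ky_k}$ lies in $\W$ (any point of $\ell_k$ outside $\W$ would lie in $\O$ within distance $<1$ of one of its two free endpoints), but $\ell_k$ is not contained in $\F$, so walking along $\ell_k$ from $x_k$ one leaves the component of $\F$ containing $x_k$ and later enters the component containing $y_k$. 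Let $\overline{u_kv_k}\subseteq\ell_k$ be the first sub-segment realizing such a transition, so that $\|u_k-v_k\|\leq\|x_k-y_k\|<2$, the set $\Int(\overline{u_kv_k})$ is disjoint from $\F$, and $u_k,v_k$ lie on the boundaries of two distinct components of $\F$. (This is the normalization phrased as ``assume $\ell_k=\overline{x_ky_k}$'' in Figure~\ref{fig:lemmas}(b), i.e.\ $u_k=x_k$ and $v_k=y_k$.)

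Reusing the circular estimate from Lemma~\ref{lem: component self interference}, let $\zeta_k$ be the arc of the radius-$1$ circle through $u_k$ and $v_k$ that, together with $\overline{u_kv_k}$, bounds the side on which the blocking part of $\O$ sits, and let $A^*_k$ be the portion of the region enclosed by $\overline{u_kv_k}\cup\zeta_k$ that lies outside $\F$. Exactly as in Lemma~\ref{lem: component self interference} this gives $A^*_k\subseteq\Int(\D_2(x_k))$ and $A^*_k\subseteq\W$; moreover the closure of $A^*_k$ contains $u_k$ and $v_k$, which lie on the boundaries of two distinct components of $\F$, so $F_i\cup\overline{A^*_k}\cup F_j$ is connected --- $\overline{A^*_k}$ is a ``bridge'' joining $F_i$ and $F_j$. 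Now split into two cases. If $A^*_1\cap A^*_2\neq\emptyset$, pick $z$ in the intersection; then $z\in\Int(\D_2(x_1))\cap\Int(\D_2(x_2))$, whence $\|x_1-x_2\|<4$ and we are done. So assume $A^*_1\cap A^*_2=\emptyset$ --- which is forced in particular whenever $\|x_1-x_2\|\geq 4$ --- and aim for a contradiction with $F_i\neq F_j$.

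Using the two disjoint bridges $\overline{A^*_1},\overline{A^*_2}$, build a closed curve $\delta$ that crosses from $F_i$ to $F_j$ through $\overline{A^*_1}$, travels inside $F_j$, crosses back to $F_i$ through $\overline{A^*_2}$, and closes up inside $F_i$; after deleting superfluous loops we may take $\delta$ to be simple. Then $\delta\subseteq F_i\cup F_j\cup\overline{A^*_1}\cup\overline{A^*_2}\subseteq\W$, so the region $R$ bounded by $\delta$ lies in the simple polygon $\W$ and therefore contains no point of $\O$. On the other hand $\partial R=\delta$ meets both $F_i$ and $F_j$, so $\Int(R)$ cannot be entirely contained in $\F$ (otherwise it would be a connected subset of $\F$ containing points of both $F_i$ and $F_j$, forcing $F_i=F_j$); hence $\Int(R)$ contains a non-free point, and tracking an obstacle point witnessing it --- together with the fact that each $\overline{A^*_k}$ is pinched by $\O$ on both sides of $\overline{u_kv_k}$ --- contradicts $R\subseteq\W$. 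I expect this final step, namely deducing an enclosed obstacle from two disjoint $\O$-pinched bridges between the same pair of components inside a simply connected $\W$, to be the main obstacle, along with the bookkeeping needed when $\ell_k$ passes through a third component of $\F$ before reaching $F_j$ (which is exactly why one works with the first transition $\overline{u_kv_k}$ rather than with $\ell_k$ itself). The remaining pieces --- that $A^*_k\subseteq\Int(\D_2(x_k))$, and the first case --- just rerun the geometric estimate of Lemma~\ref{lem: component self interference}, degenerate sub-case $\|u_k-v_k\|=2$ (center of the circle on $\overline{u_kv_k}$) included.
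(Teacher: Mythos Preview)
Your approach is essentially the paper's, but the paper organizes the argument in a way that dissolves the two difficulties you flag. Instead of defining the lens regions $A^*_k$ locally and then building a closed curve, the paper builds the closed curve $\lambda$ \emph{first}: it takes $\ell_k$ to be the sub-segment of $\overline{x_ky_k}$ whose endpoints are the closest pair with one in $F_i$ and one in $F_j$ (this is your $\overline{u_kv_k}$, but with the endpoints forced into the correct components, not merely into distinct ones), joins them by paths $\pi_i\subset F_i$ and $\pi_j\subset F_j$, and lets $A$ be the enclosed region. Then $A^*:=A\setminus\F$ is partitioned into $A^*_1,A^*_2$ according to which $\ell_k$ the segment from a non-free point to its obstacle witness crosses. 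With this definition, third components inside $\Int(\ell_k)$ are harmless: they sit in $A\cap\F$ and never enter $A^*$. And your Case~2 becomes a one-line observation rather than a separate closed-curve construction: if $A^*_1\cap A^*_2=\emptyset$ then one can route around $A^*_1$ inside $A\setminus A^*\subset\F$, connecting the two endpoints of $\ell_1$ and hence $F_i$ to $F_j$, a contradiction. Finally the unit-circle estimate is applied once, to an actual intersection point $x^*\in A^*_1\cap A^*_2$, showing $x^*\in\D_2(x_1)\cap\D_2(x_2)$.

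So your plan is correct in outline and uses the same ingredients; what you are missing is just this reordering. Defining $A^*_k$ relative to the already-built enclosed region $A$ (rather than as a standalone lens) is exactly what collapses your ``disjoint bridges'' case and absorbs the third-component bookkeeping.
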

\begin{proof}
  The proof is similar in spirit to the proof of Lemma~\ref{lem:
    component self interference} albeit slightly more involved. Assume
  for a contradiction that $x_1, x_2 \in I_{\{i,j\}}$ and
  $\D_2(x_1) \cap \D_2(x_2) = \emptyset$. By definition there exist
  $y_1 \in \D_2(x_1)$ and $y_2 \in \D_2(x_2)$ such that each pair
  $\{x_1, y_1\}, \{x_2, y_2\}$ contains one point in $F_i$ and one
  point in $F_j$. As shown in the proof for Lemma~\ref{lem: component
    self interference}, the segments
  $\overline{x_1 y_1}, \overline{x_2 y_2}$ are entirely contained
  in~$\W$. We may assume that $\overline{x_1 y_1}$ does not cross
  $\overline{x_2 y_2}$, since if it did the crossing point would be in
  $\D_2(x_1) \cap \D_2(x_2)$ and we would be done.  Therefore, there
  exists a simple closed curve $\lambda \subset \W$ composed of the
  union of two simple curves $\pi_i, \pi_j$ and two line segments
  $\ell_1, \ell_2$ such that $\pi_i \subset F_i$ and
  $\pi_j \subset F_j$, and
  $\ell_1 \subset \overline{x_1 y_1},\ell_2 \subset \overline{x_2
    y_2}$.
  Note that both $\ell_1$ and $\ell_2$ have one endpoint in $F_i$ and
  the other in $F_j$; see Figure~\ref{fig:lemmas}~(b) for an
  illustration. The end points of $\ell_1$ consist of $x_1',y'_1$,
  such that $x_1,x'_1$ and $y_1,y'_1$ belong to the same connected
  components, and minimize the distance $\|x'_1-y'_1\|$ ($\ell_2$ is
  similarly defined).

  We refer to the region enclosed by $\lambda$ (including $\lambda$)
  as $A$.  Because $\lambda\subset \W$ and $\W$ is a simple polygon,
  we know that $A \subset \W$. Furthermore, since
  $\pi_i, \pi_j \subset \F$, for any $x \in \Int(A)$ and
  $y \in \obs(x)$ (by definition, $y \in \mathbb{R}^2 \backslash \W$
  so $y \not \in A$; thus,
  $\overline{xy} \cap \lambda \neq \emptyset$), we know that
  $\overline{xy} \cap \pi_i = \overline{xy} \cap \pi_j = \emptyset$.
  Thus, $\overline{xy} \cap \Int(\ell_1) \neq \emptyset$ or
  $\overline{xy} \cap \Int(\ell_2) \neq \emptyset$, or both. Let
  $A^* \defeq A \backslash \F$ and denote by $A^*_1$ the set of
  configurations $x\in A^*$ for which there exists $y \in \obs(x)$
  such that $\overline{xy} \cap \Int(\ell_1) \neq \emptyset$; the
  set~$A^*_2$ is defined in a similar manner, only that now
  $\overline{xy} \cap \Int(\ell_2) \neq \emptyset$. Note that
  $A^*=A^*_1\cup A^*_2$.

  We claim that $A_1^*\cap A_2^*\neq \emptyset$. Indeed, if
  $A_1^*\cap A_2^*=\emptyset$ then there is a path from $x_1$ to $y_1$
  along $\partial(A_1^*)$ that stays in $A\setminus A^*$ and, hence,
  stays in $\F$, which would contradict that $x_1\in F_i$ and
  $ y_1\in F_j$ for $i\neq j$.
  Thus, there exists a point~$x^*\in A_1^*\cap A_2^*$.  We define the
  unit circles $K_1,K_2$ whose boundaries lie on the endpoints of
  $\ell_1,\ell_2$ respectively, and whose centers are located outside
  $A$.  Thus, we have $A^*_1 \subset K_1$ and $A^*_2 \subset K_2$.
  Hence, $x^* \in K_1 \cap K_2$, which implies
  $x^* \in \D_2(x_1) \cap \D_2(x_2)$, so
  $\D_2(x_1) \cap \D_2(x_2) \neq \emptyset$, contradicting our initial
  assumption.
\end{proof}

The next lemma is a generalization of the previous one. Intuitively,
instead of considering a cycle of length~2 among interacting
free-space components, we now consider larger cycles.

\begin{lemma} \label{lem: no circular interference} Let
  $\{\phi(1), \phi(2), ..., \phi(h)\} \subset \{1, 2, ..., q\}$, and
  let $x_1, x_2, ..., x_h$ be points such that for all $i$,
  $x_i \in I_{\{\phi(i), \phi(i+1)\}}$, where
  $\phi(h + 1) \equiv \phi(1)$. (Thus the list is circular with
  respect to its index).  Then there exists some $i \neq j$ such that
  $\D_2(x_i) \cap \D_2(x_j) \neq \emptyset$.
\end{lemma}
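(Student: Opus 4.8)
The plan is to reduce Lemma~\ref{lem: no circular interference} to Lemma~\ref{lem: no double interference} by a pigeonhole-type argument, showing that a long ``interference cycle'' among free-space components can always be shortened unless two consecutive collision discs already overlap. Suppose for contradiction that $\D_2(x_i)\cap\D_2(x_j)=\emptyset$ for all $i\neq j$. For each $i$, since $x_i\in I_{\{\phi(i),\phi(i+1)\}}$, there is a companion point $y_i\in\D_2(x_i)$ with $x_i,y_i$ lying in the two components $F_{\phi(i)},F_{\phi(i+1)}$ (one in each). As in the proofs of the previous two lemmas, each segment $\overline{x_iy_i}$ lies in $\W$, and we may assume no two of these segments cross (a crossing point would lie in $\D_2(x_i)\cap\D_2(x_j)$, contradicting our assumption). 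I would then pick, for each $i$, the minimal subsegment $\ell_i\subset\overline{x_iy_i}$ with one endpoint in $F_{\phi(i)}$ and one in $F_{\phi(i+1)}$, exactly as in Lemma~\ref{lem: no double interference}.

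Now I would extract a closed curve from the cyclic data. Along the cycle $\phi(1),\dots,\phi(h),\phi(1)$, consecutive components $F_{\phi(i)}$ and $F_{\phi(i+1)}$ share the pair of endpoints of $\ell_i$ (one in each), so I can walk around: use a path inside $F_{\phi(i)}$ from the $F_{\phi(i)}$-endpoint of $\ell_{i-1}$ to the $F_{\phi(i)}$-endpoint of $\ell_i$, then cross $\ell_i$, and repeat. This produces a closed curve $\lambda\subset\W$ built from arcs lying in the various components $F_{\phi(i)}$ together with the segments $\ell_1,\dots,\ell_h$. By perturbing/pruning (the paths have finite complexity since $\W$ is a polygon) I may take $\lambda$ to be a simple closed curve, still meeting each $\ell_i$ only along (part of) its relevant crossing. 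Let $A$ be the closed region bounded by $\lambda$; since $\W$ is a simple polygon and $\lambda\subset\W$, we get $A\subset\W$. Set $A^*\defeq A\setminus\F$ and, for each $i$, let $A^*_i$ be the set of $z\in A^*$ admitting $w\in\obs(z)$ with $\overline{zw}\cap\Int(\ell_i)\neq\emptyset$. The same argument as before shows $A^*=\bigcup_{i=1}^h A^*_i$: any $z\in\Int(A)\setminus\F$ has an obstacle point $w\notin A$, so $\overline{zw}$ must exit $A$ through $\lambda$, and since it cannot cross the component-arcs (those are in $\F$), it crosses some $\Int(\ell_i)$.

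The core step is then a connectedness/pigeonhole argument on the $A^*_i$. I claim the sets $A^*_i$ cannot be pairwise disjoint: if they were, then $A^*$ would decompose into separated pieces, and one could route a path inside $A\setminus A^*\subset\F$ around $\partial(A^*_i)$ connecting the two endpoints of some $\ell_i$ — but those endpoints lie in distinct components $F_{\phi(i)}\neq F_{\phi(i+1)}$, a contradiction. Hence some $A^*_i\cap A^*_j\neq\emptyset$ with $i\neq j$; pick $x^*$ in this intersection. Exactly as in Lemma~\ref{lem: no double interference}, $A^*_i$ is contained in the unit disc $K_i$ bounded by a unit circle through the endpoints of $\ell_i$ with center outside $A$, and likewise $A^*_j\subset K_j$; since the endpoints of $\ell_i$ lie in $\D_2(x_i)$ (they are within distance $2$ of $x_i$, one of them being $x_i$ or reachable through $\F$ at bounded distance — more precisely they lie on $\overline{x_iy_i}\subset\D_2(x_i)$), we get $K_i\subseteq\D_2(x_i)$, so $x^*\in\D_2(x_i)\cap\D_2(x_j)$, contradicting the assumption.

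I expect the main obstacle to be the bookkeeping in the ``extract a simple closed curve'' step: with $h$ components and $h$ crossing segments there is more combinatorial freedom than in the $h=2$ case, and one has to argue carefully that the arcs in the various $F_{\phi(i)}$ together with the $\ell_i$ can be glued into a \emph{single} simple closed curve (handling the possibility that a component is visited more than once, i.e. that $\phi$ is not injective, and that the walk around $\lambda$ might create self-intersections that must be pruned). A clean way to sidestep much of this is to note that it suffices to find \emph{any} closed curve in $\W$ alternating between component-arcs and the $\ell_i$'s and then take an innermost simple sub-loop; the disjointness-of-$A^*_i$ argument and the $K_i$-containment argument are then essentially verbatim from the proof of Lemma~\ref{lem: no double interference}, just indexed over the $\ell_i$ appearing on that sub-loop rather than over just two segments.
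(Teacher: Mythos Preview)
Your proposal is correct and follows essentially the same route as the paper: assume all collision discs are disjoint, build a simple closed curve $\lambda\subset\W$ from arcs $\pi_i\subset F_{\phi(i)}$ and crossing segments $\ell_i$, set $A^*=A\setminus\F$, and derive a contradiction by a dichotomy on whether the ``obstacle blobs'' associated to different $\ell_i$'s meet. The paper phrases the dichotomy as ``either some curve in $A^*$ connects $\ell_i$ to $\ell_j$ (then the $K_i$-disc argument forces $\D_2(x_i)\cap\D_2(x_k)\neq\emptyset$), or no such curve exists (then a path in $A\cap\F$ joins two different $\pi_i$'s, contradicting that they lie in distinct components)''; your version with the explicit sets $A^*_i$ and their non-disjointness is the same argument in slightly different clothing, and the paper's own proof is only an outline at the same level of detail as yours.
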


\begin{proof}
  This can be proved in a manner completely analogous to the proof of
  Lemma~\ref{lem: no double interference}; we will outline the proof
  here. We assume for a contradiction that
  $\D_2(x_i) \cap \D_2(x_j) = \emptyset$ for all $i \neq j$.  We can
  argue that we can construct a simple closed curve
  $\lambda \subset \W$ passing through
  $F_{\phi(1)}, F_{\phi(2)}, ..., F_{\phi(h)}$ (in that order), which
  is composed of simple closed curves $\pi_i \subset F_{\phi(i)}$ and
  line segments $\ell_i \subset \W$ with endpoints in $F_{\phi(i)}$
  and $F_{\phi(i+1)}$.  We then consider the area $A$ enclosed by
  $\lambda$ and note that $A \subset \W$.  Let
  $A^* \defeq A \setminus \F$. If there exists some simple curve
  $\pi^* \subset A^*$ connecting $\ell_i$ to $\ell_j$ for some
  $i \neq j$, we can show that there exists some $k$ such that
  $\D_2(x_i) \cap \D_2(x_k) \neq \emptyset$, contradicting our
  assumption. Therefore no such $\pi^*$ exists for any $i \neq j$. But
  this means that there exists some simple path
  $\pi' \subset A \cap \F$ which joins $\pi_i$ and $\pi_j$ for some
  $i \neq j$, which contradicts the fact that $\pi_i$ and $\pi_j$
  belong to different components of $\F$.
\end{proof}

\section{Algorithm for a single component}\label{sec:single_cc}
%
%
In this section we consider a single component $F_i$ of $\F$.  We
present an algorithm that solves the problem within $F_i$, ignoring
the possibility that robots in $F_i$ might collide with robots in
other components~$F_j$.  In the next section we will show how to avoid
such collisions without changing the motion plans within the
individual components.  As before we set $S_i \defeq S\cap F_i$ and
$T_i \defeq T\cap F_i$, and assume $|S_i| = |T_i|$.
%
\newcommand{\adg}{motion graph}
\subsection{The \adg}\label{subsec:def_F_i}
The \emph{\adg} $G_i$ of $F_i$ is a graph whose vertices represent
start or target configurations, and whose edges represent
``adjacencies'' between these configurations, as defined more
precisely below.

Recall that for any $x\in F_i$ we defined
$D^*(x) \defeq \D_2(x) \cap F_i$ as the part of the collision disc
of~$x$ inside~$F_i$, and recall from Lemma~\ref{lem: component self
  interference} that $D^*(x)$ is connected.  Moreover, for any two
distinct configurations $x_1, x_2 \in S_i \cup T_i$ we have
$D^*(x_1) \cap D^*(x_2) = \emptyset$, because
$\D_2(x_1) \cap \D_2(x_2) = \emptyset$ by the assumption that the
start and target positions are well-separated.  The vertices of our
\adg~$G_i$ correspond to the start and target configurations in
$S_i\cup T_i$. From now on, and with a slight abuse of notation we
will not distinguish between configurations in $S_i\cup T_i$ and their
corresponding vertices in~$G_i$.

Now consider
$F^*_i\defeq F_i \setminus \bigcup_{x \in S_i \cup T_i} D^*(x)$, the
complement of the collision discs of the given start and target
configurations in~$F_i$. This complement consists of several connected
components, which we denote by $F_i^1,F_i^2,\ldots$.  If the \adg~
$G_i$ contains an edge $(x_1,x_2)$ then there is a component
$F_i^{\ell}$ that is adjacent to both $D^*(x_1)$ and $D^*(x_2)$.  In
other words, two configurations $x_1$ and $x_2$ are connected in $G_i$
if there is a path from $x_1$ to $x_2$ that stays inside $F_i$ and
does not cross the collision disc of any other configuration
$x_3\in S_i\cup T_i$. Figure~\ref{fig:partition_of_F} illustrates the
definition of~$G_i$.  The following observation summarizes the main
property of the \adg.
\begin{observation}\label{le:adg}
  Suppose all robots in $F_i$ are located at a start or target
  configuration in $S_i\cup T_i$, and let $(x_1,x_2)$ be any edge
  in~$G_i$. Then a robot located at $x_1$ can move to $x_2$ without
  colliding with any of the other robots.
\end{observation}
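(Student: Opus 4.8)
The plan is to realise the motion of the robot at $x_1$ as a concatenation of three sub-paths: one that stays inside $D^*(x_1)$, one that stays inside the component $F_i^{\ell}$ of $F^*_i$ that witnesses the edge $(x_1,x_2)$, and one that stays inside $D^*(x_2)$. Since $(x_1,x_2)$ is an edge of $G_i$, such a component $F_i^{\ell}$ exists and is adjacent to both $D^*(x_1)$ and $D^*(x_2)$, so I would first pick points $p_1\in\overline{F_i^{\ell}}\cap\overline{D^*(x_1)}$ and $p_2\in\overline{F_i^{\ell}}\cap\overline{D^*(x_2)}$. By Lemma~\ref{lem: component self interference} the sets $D^*(x_1)$ and $D^*(x_2)$ are connected, and, being bounded-complexity planar regions, path-connected; the same holds for $F_i^{\ell}$, a connected component of $F^*_i$. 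Hence there are paths from $x_1$ to $p_1$ inside $\overline{D^*(x_1)}$, from $p_1$ to $p_2$ inside $\overline{F_i^{\ell}}$, and from $p_2$ to $x_2$ inside $\overline{D^*(x_2)}$, whose concatenation is a path $\pi$ from $x_1$ to $x_2$ lying in $\overline{D^*(x_1)}\cup\overline{F_i^{\ell}}\cup\overline{D^*(x_2)}\subseteq F_i\subseteq\F$.

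Next I would verify that traversing $\pi$ causes no collision with the stationary robots. A robot at configuration $y$ collides with a robot fixed at some $x_3$ exactly when $y\in\Int(\D_2(x_3))$, so it is enough to show that $\pi$ meets $\Int(\D_2(x))$, for $x\in S_i\cup T_i$, only when $x\in\{x_1,x_2\}$ --- and no robot occupies $x_1$ (it is the one that moves) or $x_2$ (whenever the edge $(x_1,x_2)$ is actually used by the algorithm). Fix $x_3\in(S_i\cup T_i)\setminus\{x_1,x_2\}$. The well-separatedness of $S\cup T$ gives $\|x_1-x_3\|\geq 4$ and $\|x_2-x_3\|\geq 4$, so the closed discs $\overline{\D_2(x_1)}$ and $\overline{\D_2(x_2)}$ --- which contain $\overline{D^*(x_1)}$ and $\overline{D^*(x_2)}$ respectively --- are disjoint from $\Int(\D_2(x_3))$. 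Moreover, since $F_i^{\ell}\subseteq F^*_i=F_i\setminus\bigcup_{x\in S_i\cup T_i}D^*(x)$ and $F_i^{\ell}\subseteq F_i$, we get $F_i^{\ell}\cap\D_2(x_3)=F_i^{\ell}\cap D^*(x_3)=\emptyset$, whence $\overline{F_i^{\ell}}$ too is disjoint from the open disc $\D_2(x_3)$. Thus $\pi$ avoids $\Int(\D_2(x_3))$, which is exactly what is needed.

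The step I expect to need the most care is the gluing at the seams $p_1,p_2$: the argument above produces a path inside a union of closures, and one must make sure it can be taken to be an honest continuous path that actually passes from each region into the adjacent one. This is where I would invoke that $\W$ is a simple polygon and the collision discs are round, so that $F_i$, each $D^*(x)$, and each component $F_i^{\ell}$ are planar regions with piecewise-algebraic boundaries of bounded complexity; each sub-path can then be perturbed within an arbitrarily small neighbourhood of $p_1$ (resp.\ $p_2$) so that it crosses from one region into the neighbouring one while remaining in $\overline{D^*(x_1)}\cup\overline{F_i^{\ell}}$ (resp.\ $\overline{F_i^{\ell}}\cup\overline{D^*(x_2)}$), which by the previous paragraph keeps it clear of every $\Int(\D_2(x_3))$. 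I anticipate this to be routine topological book-keeping rather than a genuine obstacle; the substantive content of the observation is the combination of Lemma~\ref{lem: component self interference} with the well-separatedness assumption.
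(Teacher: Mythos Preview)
Your proposal is correct and is essentially the natural unpacking of what the paper leaves implicit: in the paper this statement is labeled an \emph{Observation} and is given no proof, since the sentence immediately preceding it already says that ``two configurations $x_1$ and $x_2$ are connected in $G_i$ if there is a path from $x_1$ to $x_2$ that stays inside $F_i$ and does not cross the collision disc of any other configuration $x_3\in S_i\cup T_i$.'' Your three-piece construction through $D^*(x_1)$, $F_i^{\ell}$, and $D^*(x_2)$, together with the well-separatedness check, is precisely how one would make that sentence rigorous, and the closure/gluing concerns you flag are indeed routine.
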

\noindent \emph{Remark.} We could also work with the dual graph of the
partitioning of $F_i$ into cells induced by the collision discs. This
dual graph would, in addition to vertices representing start and
target configurations, also have vertices for the regionss
$F_i^{\ell}$. For the pebble-motion problem discussed below it is
easier to work with the graph as we defined it.  This graph may have
many more edges, but in the implementation of our algorithm described
in Section~\ref{sec:implementation} we avoid computing it explicitly.
\begin{figure}[b]
  \centering \subfloat[ ]{\includegraphics[width=0.5\columnwidth]
    {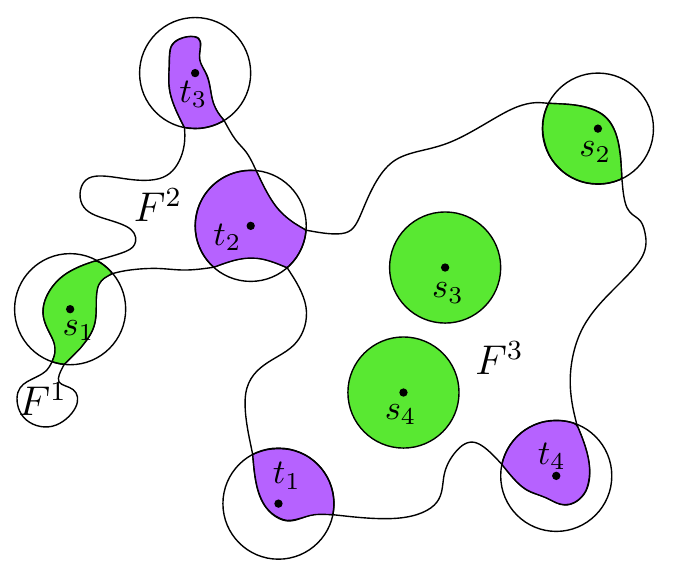}} \subfloat[
  ]{\raisebox{5mm}{\includegraphics[width=0.4\columnwidth]
      {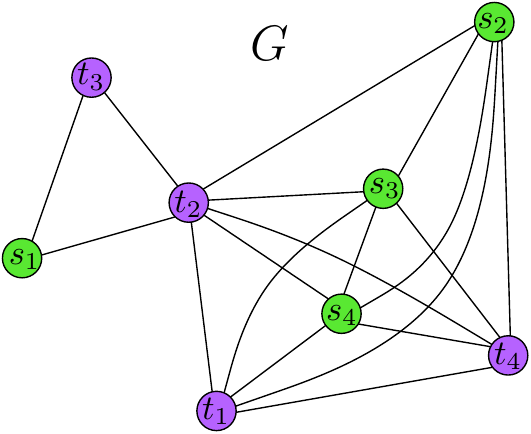}}}
  \caption{(a) A partition of a maximal connected component $F$. The
    start and target positions consist of the elements
    $S'=\{s_1,s_2,s_3,s_4\}, T'=\{t_1,t_2,t_3,t_4\}$, respectively,
    where the areas $D^*(s)$ for $s\in S'$ are drawn in green and
    $D^*(t)$ for $t\in T'$ are drawn in purple. $F^*$ consists of the
    parts $F^1,F^2,F^3$. (b) A \adg~of $F$.}
  \label{fig:partition_of_F}
\end{figure}
%
\subsection{The unlabeled pebble-motion problem}\label{subsec:pebble_motion}
Using the \adg~ $G_i$ we can view the motion-planning problem within
$F_i$ as a pebble-motion problem. (A similar approach was taken
in~\cite{sh-kcmr}, where a sampling-based algorithm for multi-robot
motion planning produces multiple pebble problems by random sampling
of the configuration space.)  To this end we represent a robot located
at configuration $x\in S_i\cup T_i$ by a \emph{pebble} on the
corresponding vertex in $G_i$.  The pebbles are indistinguishable,
like the robots, and they can move along the edges of the graph.  At
the start of the pebble-motion problem for a graph with vertex set
$S_i\cup T_i$, with $|S_i| = |T_i|$, there is a pebble on every vertex
$x\in S_i$. The goal is to move the pebbles such that each pebble ends
up in vertex in $T_i$, under the following conditions: (1) no two
pebbles may occupy the same vertex at the same time, and (2) pebbles
can only halt at vertices, and (3) at most one pebble may move (that
is, be in transit along an edge) at any given time.  We call this
problem the \emph{unlabeled pebble-motion problem}.  The following
lemma follows immediately from Observation~\ref{le:adg}.
\begin{lemma}\label{lem:pebbles to robots} 
  Any solution to the unlabeled pebble-motion problem on $G_i$ can be
  translated into a valid collision-free motion plan for the robots in
  $F_i$.
\end{lemma}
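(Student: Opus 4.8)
The plan is to show that a solution to the unlabeled pebble-motion problem on $G_i$ — a discrete sequence of single-pebble moves along edges — can be executed as a continuous, collision-free motion plan for the robots in $F_i$, by simulating the pebble moves one at a time. Since the pebbles are indistinguishable and the robots are unlabeled, it does not matter which robot corresponds to which pebble; we fix an arbitrary bijection between robots and pebbles at the start, and maintain throughout the execution the invariant that whenever no pebble is in transit, every robot sits exactly at the configuration $x\in S_i\cup T_i$ corresponding to the vertex occupied by its pebble. Because $|S_i|=|T_i|$ and the pebble solution ends with one pebble on each vertex of $T_i$, at the end of the simulation the robots occupy exactly $T_i$, as required.

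First I would describe the execution step. Consider a single pebble move along an edge $(x_1,x_2)$ of $G_i$: by the rules of the pebble-motion problem, at the moment this move is made, the vertex $x_2$ is unoccupied, and all other pebbles are resting on their vertices. By the invariant, the corresponding robot sits at $x_1$, no robot sits at $x_2$, and every other robot sits at some vertex $x_3\in (S_i\cup T_i)\setminus\{x_1,x_2\}$. By Observation~\ref{le:adg}, since $(x_1,x_2)$ is an edge of $G_i$, the robot at $x_1$ can move to $x_2$ without colliding with any of the other robots (which remain stationary) and while staying inside $F_i$ (hence inside $\F$, so it does not collide with the obstacle space either). Executing this motion re-establishes the invariant with the pebble now on $x_2$. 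Concatenating the motions corresponding to the successive pebble moves, and letting all robots stay put during the intervals when no pebble is in transit, yields continuous paths $\pi_i:[0,1]\to\F$ for the robots; the non-collision condition between robots holds because at every instant at most one robot is moving (condition (3) of the pebble-motion problem) and that single moving robot avoids all the others by Observation~\ref{le:adg}. Rescaling the time parameter of the concatenated paths to $[0,1]$ finishes the construction.

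This argument is essentially a bookkeeping exercise once Observation~\ref{le:adg} is in hand — that observation already encapsulates the one geometric fact that a robot can traverse an edge of $G_i$ without colliding with robots resting on the remaining start/target configurations. The main point requiring a little care is verifying that the hypothesis of Observation~\ref{le:adg} is met at each step, namely that immediately before each pebble move all robots (other than the one about to move) are located at start or target configurations; this follows directly from the maintained invariant together with pebble-motion rule (2), that pebbles only ever halt at vertices, so that there is a well-defined "rest configuration" between consecutive moves. I do not anticipate a genuine obstacle here: the only subtlety is the purely formal one of concatenating finitely many path segments into maps on $[0,1]$ and checking continuity at the junction times, which is routine.
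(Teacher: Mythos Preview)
Your proposal is correct and takes essentially the same approach as the paper: the paper simply states that the lemma follows immediately from Observation~\ref{le:adg}, and your argument is precisely the natural unpacking of that statement---execute the pebble moves one at a time, maintain the invariant that between moves all robots sit at vertices of $G_i$, and invoke Observation~\ref{le:adg} for each single move.
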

Kornhauser~\cite[Section 3, first lemma]{k-cpmg} proved that the
unlabeled pebble-motion problem is, in fact, always solvable, and he
gave an algorithm to find a solution. Since he did not analyze the
running time of his algorithm, we sketch the solution in the proof of
the lemma below.
\begin{lemma}{\rm\bf\cite{k-cpmg}}\label{lem:pebble} 
  For any graph $G$ with vertex set $S\cup T$ where $|S|=|T|$, there
  exists a solution to the unlabeled pebble-motion problem.  Moreover,
  a solution can be found in $O(|S|^2)$ time.
\end{lemma}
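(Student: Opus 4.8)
**The plan is to prove both halves—solvability and the $O(|S|^2)$ running time—together, by exhibiting an explicit algorithm.** The key structural fact I would exploit is that, up to contracting edges, we may reduce to a tree. So the first step is a preprocessing phase: compute a spanning forest of $G$ (one spanning tree per connected component). Within each connected component, a pebble can be routed along tree edges only, so it suffices to solve the problem on a tree $\tau$ with $k$ vertices, of which some subset $S_\tau$ carries pebbles and an equal-size subset $T_\tau$ is the target. Because the pebbles are unlabeled, I only need that the multiset of occupied vertices equals $T_\tau$ at the end; I never have to track which pebble is which. Note that a connected component of $G$ can only be solved if it contains equally many start and target vertices; but this need not hold per-component even though $|S|=|T|$ globally—so I should first check feasibility and, if some component is unbalanced, I would point out (as the lemma's hypothesis on $G$ having vertex set $S\cup T$ with the pebbles initially on all of $S$ is stated for the graph as a whole) that the construction of $G_i$ in our setting guarantees each relevant graph is balanced; in the abstract lemma one simply restricts attention to a single connected component and assumes balance there, which is WLOG for the statement as used.

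**The core of the algorithm is a leaf-stripping recursion on the tree $\tau$.** Pick a leaf $v$ of $\tau$ with neighbor $u$. There are three cases. (i) If $v \notin S_\tau \cup T_\tau$: delete $v$ and recurse on $\tau - v$; no pebble ever needs to visit a vertex that is neither a source nor a target, since $\tau-v$ is still connected and contains all of $S_\tau,T_\tau$. (ii) If $v \in S_\tau \cap T_\tau$ or, more generally, if $v\in T_\tau$ and there is currently a pebble on $v$: "freeze" the pebble on $v$, i.e. remove $v$ from the tree and remove one copy of the pebble count, recursing on $\tau-v$ with one fewer pebble. (iii) Otherwise $v$ is a target with no pebble (or a source that must be vacated). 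If $v\in T_\tau$ with no pebble on it: find the nearest pebble in the tree to $v$ and walk it step by step toward $v$ along the unique tree path, at each step moving it onto the next vertex—this is legal because we always move the pebble that is closest to $v$, so the next vertex toward $v$ is empty (any pebble sitting there would have been closer). Once a pebble reaches $v$, freeze it and recurse. If instead $v\in S_\tau\setminus T_\tau$ carries a pebble that must leave: push that pebble one step toward the interior (legal for the same nearest-empty-vertex reason, or just move it to $u$ if $u$ is empty), then delete the leaf $v$; if $u$ becomes occupied we've just relocated the pebble into the shrunken tree and continue. The recursion terminates because each step permanently removes at least one vertex from the tree.

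**The running-time analysis is the step I expect to require the most care.** Building the spanning forest is $O(|S|+|E|)$, but $G_i$ may be dense, so I would instead bound everything in terms of $k=|S_\tau|+|T_\tau|\le 2|S|$ after noting we only ever keep tree edges, of which there are $k-1$ per tree. Each of the $O(k)$ leaf-stripping rounds does one of: a deletion ($O(1)$ amortized), or a pebble-walk of length at most the current tree diameter $O(k)$, each step costing $O(1)$. So the total number of elementary pebble moves is $O(k^2)=O(|S|^2)$, and the bookkeeping (finding the nearest pebble to the current leaf, identifying the tree path) can be done within the same bound with a simple BFS per round, $O(k)$ per round, $O(k^2)$ total. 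Hence the algorithm runs in $O(|S|^2)$ time and, as a byproduct, produces a solution consisting of $O(|S|^2)$ pebble moves. Combined with Lemma~\ref{lem:pebbles to robots}, which turns any such pebble solution into a collision-free motion plan for the robots, this establishes both claims of the lemma. The main obstacle is making precise the "move the nearest pebble" invariant—namely that the first vertex along the path from the nearest pebble toward the destination is always empty—and handling case (iii) when $v$ is a source that must be vacated but the neighbor $u$ is itself occupied; I would resolve the latter by choosing the stripping order so that we always process a target leaf before a pure-source leaf whenever possible, or by observing that pushing into an occupied $u$ just means we recurse with the pebble now resident at $u$, which is fine since we never deleted a needed vertex.
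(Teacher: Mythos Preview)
Your proposal is essentially the same as the paper's proof: take a spanning tree, strip leaves one at a time (preferring target leaves), and for a target leaf pull in the nearest pebble by BFS along the unique tree path, which is clear precisely because the pebble is nearest. The running-time accounting also matches.

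The one place where you are looser than the paper is the evacuation of a pure-source leaf $v$ whose neighbor $u$ is occupied. Your suggested fallback of ``pushing into an occupied $u$ just means we recurse with the pebble now resident at $u$'' is not legal, since two pebbles may never share a vertex. The paper resolves this cleanly with a chain shift: find the nearest \emph{unoccupied} vertex $w$ (which exists since there are $|S|$ pebbles on $|S\cup T|>|S|$ vertices in the non-trivial case), and move every pebble on the $v$--$w$ path one step toward $w$, farthest first. You allude to this (``legal for the same nearest-empty-vertex reason''), so you have the right idea; it just needs to be stated as a chain move rather than a single push. Also note that your case~(i), $v\notin S_\tau\cup T_\tau$, cannot occur here since the vertex set of $G$ is $S\cup T$ by hypothesis.
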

\begin{proof}
  Let $\T_{G}$ be a spanning tree of $G$. The algorithm performs
  $O(|S|)$ \emph{phases}.  In each phase, one or more pebbles may be
  moved and one leaf will be removed from $\T_{G}$, possibly with a
  pebble on it. After the phase ends, the algorithm continues with the
  next phase on the modified tree~$\T_{G}$, until all pebbles have
  been removed and the problem has been solved. A phase proceeds as
  follows.

  If there are leaves $v$ that are target vertices then we select such
  a leaf~$v$.  If $v$ does not yet contain a pebble, we find a pebble
  closest to $v$ in $\T_{G}$---this can be done by a simple
  breadth-first search---and move it to $v$ along the shortest path
  in~$G$. Note that the vertices on the shortest path cannot contain
  other pebbles, since we took a closest pebble. We now remove the
  leaf $v$, together with the pebble occupying it, and end the phase.
  If all leaves in $\T_G$ are start vertices, then let $v$ be such a
  leaf.  If $v$ is not occupied by a pebble it can be removed from
  $\T_{G}$, and the phase ends.  Otherwise a pebble resides in $v$,
  which we move away, as follows. We find the closest unoccupied
  vertex $w$ to $v$ of $\T_{G}$ and move all pebbles on this shortest
  path (including the pebble on $v$) one step closer to $w$, in order
  of decreasing distance from $w$.  After we evacuated~$v$ we remove
  it from $\T_{G}$ to end the phase.

  The algorithm produces paths of total length $O(|S|^2)$, and it can
  easily be implemented to run in $O(|S|^2)$ time.  In some cases
  $\Omega(|S|^2)$ moves are required, e.g., when $G$ is a single path
  with all start positions in the first half of the path and all
  target positions in the second half.
\end{proof}
\begin{lemma}\label{lem: one component}
  Suppose we have an instance of our multi-robot path planning problem
  where $|S_i|=|T_i|$ for every component $F_i$ of the free space
  $\F$.  Then for each $F_i$ there exists a motion plan $\Pi_i$ that
  brings the robots in $F_i$ from $S_i$ to $T_i$, such that they do
  not collide with the obstacle space nor with the other robots
  in~$F_i$.
\end{lemma}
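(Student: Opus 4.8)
The plan is to reduce the continuous problem inside $F_i$ to the unlabeled pebble-motion problem on the motion graph $G_i$, invoke Lemma~\ref{lem:pebble} to obtain a pebble solution, and translate it back via Lemma~\ref{lem:pebbles to robots}. The only genuinely new ingredient this requires is that $G_i$ is \emph{connected}; everything else is assembling facts already proved. If $m_i=0$ there is nothing to do (the empty plan works), so from now on assume $m_i\geq 1$.

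First I would prove that $G_i$ is connected. Recall $F_i = \bigcup_{x\in S_i\cup T_i} D^*(x)\ \cup\ \bigcup_\ell F_i^\ell$, where the collision discs $D^*(x)$ are pairwise disjoint (by well-separatedness) and each connected (Lemma~\ref{lem: component self interference}), and the $F_i^\ell$ are the connected components of the complement $F^*_i$. Introduce the bipartite incidence graph $H$ whose two sides are the discs $\{D^*(x)\}$ and the regions $\{F_i^\ell\}$, with an edge between $D^*(x)$ and $F_i^\ell$ whenever their closures meet. Since $F_i$ is connected (indeed path-connected, being a nice planar region), for any two discs there is a path between them inside $F_i$; choosing the path generically so that it meets the boundaries of the discs and regions in a finite, well-behaved set, the sequence of cells it traverses is well-defined and yields a walk in $H$ between the two discs, so $H$ is connected. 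Now every edge of $G_i$ corresponds precisely to a length-two path $D^*(x_1), F_i^\ell, D^*(x_2)$ in $H$ (this is the definition of $G_i$ via Observation~\ref{le:adg}); hence a walk in $H$ between two discs decomposes into consecutive such length-two pieces and projects to a walk in $G_i$ between the corresponding vertices. Therefore $G_i$ is connected. (One should also note that for $m_i\geq 1$ no region $F_i^\ell$ is orphaned: a region whose closure met no disc would be clopen in $F_i$, contradicting connectedness of $F_i$ unless no disc was removed at all, i.e.\ $m_i=0$.)

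With $G_i$ connected and $|S_i|=|T_i|=m_i$, Lemma~\ref{lem:pebble} gives a solution to the unlabeled pebble-motion problem on $G_i$, where initially one pebble sits on each vertex of $S_i$ and the goal is that every vertex of $T_i$ ends up occupied. By Lemma~\ref{lem:pebbles to robots}, this pebble solution translates into a motion plan $\Pi_i$ that moves the $m_i$ robots in $F_i$ from $S_i$ to $T_i$ without collisions among the robots in $F_i$; and since every path $\Pi_i$ produces stays inside $F_i\subseteq\F$, no robot collides with the obstacle space either. This is the desired plan.

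I expect the connectivity of $G_i$ to be the only real obstacle; the rest is bookkeeping on top of Lemmas~\ref{lem: component self interference}, \ref{lem:pebble}, and \ref{lem:pebbles to robots}. The one point requiring a little care within that step is the genericity of the connecting path, so that ``the sequence of discs and regions it visits'' is a legitimate finite object from which the walk in $H$ (and hence in $G_i$) is read off.
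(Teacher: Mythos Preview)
Your approach is essentially the paper's: Lemma~\ref{lem: one component} is stated there without proof, as an immediate corollary of Lemma~\ref{lem:pebbles to robots} and Lemma~\ref{lem:pebble}. You correctly identify that the one point the paper leaves implicit is the connectedness of $G_i$ (needed for the spanning-tree argument in Lemma~\ref{lem:pebble}), and your incidence-graph argument via the connectedness of $F_i$ is a sound way to supply it; the paper simply takes this for granted from the construction.
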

\section{Combining single-component plans}\label{sec:multi_cc}
We now consider possible interactions between robots contained in
different components $F_i$ and $F_j$ of $\F$.  As before, we assume
that $|S_i| = |T_i|$ for all $i$.  We will show that there exists a
permutation $\sigma : \{1, 2, ..., \ell\} \to \{1, 2, ..., \ell\}$
such that we can independently execute the single-component motion
plans for each component $F_i$ as long as we do so in the order
$F_{\sigma(1)}, F_{\sigma(2)}, ..., F_{\sigma(\ell)}$.

To obtain this order, we define a directed graph representing the
structure of $\F$, which we call the \emph{directed-interference
  forest} $\G = (\V, \E)$, where the nodes in $\V$ correspond to the
components $F_i$. We add the directed edge $(F_i, F_j)$ to $\E$ if
either there exists a start position $s \in S$ such that
$s \in I_{(i,j)}$, or there exists a target position $t \in T$ such
that $t \in I_{(j,i)}$. For any $i \in \{1, 2,..., \ell\}$, we
additionally define $N^+(i)$ to be the set of indices of the vertices
in the out-neighborhood of $v_i$; similarly, $N^-(i)$ is defined as
the set of indices of the vertices in the in-neighborhood of $v_i$.

Note that by Lemma~\ref{lem: no double interference} and since $S,T$
are well separated, we cannot have more than one start or target
position in $I_{\{i,j\}}$.  This implies that $\E$ cannot contain both
$(v_i, v_j)$ and $(v_j, v_i)$.  Lemma~\ref{lem: no circular
  interference} and the well-separatedness condition additionally
imply that $\G$ cannot have an undirected cycle. Thus, $\G$ is a
directed forest.

We now produce the desired ordering using $\G$. Consider $F_i \in \V$,
and suppose that for all $j \in N^+(i)$, every robot in $F_j$ is at a
start position, and for all $j \in N^-(i)$, every robot in $F_j$ is at
a target position.  Additionally, suppose that for all
$j \not \in N^+(i) \cup N^-(i)$, every robot in $F_j$ is at a start or
target position.  Then, by the definition of $\G$, no robot is at a
configuration in $I_{\{i,j\}}$ for any $j \neq i$; thus any motion
plan for the robots in $F_i$, such as the one described in
Section~\ref{sec:single_cc}, can be carried out without being blocked
by the robots not in $F_i$.  Hence, if we have an ordering
$\sigma: \{1, 2, ..., \ell\} \to \{1, 2, ..., \ell\}$ such that for
all (directed) edges $(v_i, v_j) \in \E$,
$\sigma^{-1}(i) < \sigma^{-1}(j)$, where $\sigma^{-1}$ is the inverse
permutation of $\sigma$, then we can execute the motion plans for the
robots in $F_{\sigma(1)}, F_{\sigma(2)}, ..., F_{\sigma(\ell)}$ in
that order. Since $\G$ is a directed forest such an ordering can be
produced using topological sorting on the vertices of $\G$. Thus,
combining this result with Lemma~\ref{lem: one component} we obtain:

\begin{theorem}\label{thm: many components}
  Let there be a collection of $m$ unlabeled unit-disc robots in a
  simple polygonal workspace $\W \subset \mathbb{R}^2$, with start and
  target configurations $S, T$ that are well-separated. Then if for
  every maximal connected component $F_i$ of $\F$ (where $\F$ is the
  free space for a single unit-disc robot in $\W$)
  $|S \cap F_i| = |T \cap F_i|$, there exists a collision-free motion
  plan for these robots starting at $S$ which terminates with every
  position of $T$ occupied by a robot.
\end{theorem}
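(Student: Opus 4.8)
The plan is to assemble Theorem~\ref{thm: many components} from the pieces already established in the excerpt, essentially by combining the single-component solvability result with the acyclicity of the interference structure. First I would invoke Lemma~\ref{lem: one component}: under the stated hypothesis $|S\cap F_i|=|T\cap F_i|$ for every component, each component $F_i$ admits a motion plan $\Pi_i$ that moves the robots of $F_i$ from $S_i$ to $T_i$ while avoiding the obstacle space and mutual collisions \emph{within} $F_i$. The only thing missing is that executing these plans naively might cause collisions \emph{between} robots in different components, since a robot moving inside $F_i$ may pass through a configuration in the interference set $I_{(i,j)}$ and clash with a stationary robot in $F_j$.

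To rule this out, I would introduce the directed-interference forest $\G=(\V,\E)$ exactly as in the excerpt: a node per component, and an edge $(F_i,F_j)$ whenever some start position lies in $I_{(i,j)}$ or some target position lies in $I_{(j,i)}$. The key structural claim is that $\G$ is a directed forest. This follows from Lemma~\ref{lem: no double interference} together with well-separatedness — which forbids two of the given start/target positions from both lying in a single mutual interference set $I_{\{i,j\}}$, so $\E$ cannot contain both $(F_i,F_j)$ and $(F_j,F_i)$ — and from Lemma~\ref{lem: no circular interference}, which (again combined with well-separatedness) rules out any longer undirected cycle among the components.

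Given that $\G$ is a directed forest, I would fix a topological order $\sigma$ of its vertices, so that $(F_i,F_j)\in\E$ implies $\sigma^{-1}(i)<\sigma^{-1}(j)$. The execution strategy is: process the components in the order $F_{\sigma(1)},\ldots,F_{\sigma(\ell)}$, running the plan $\Pi_{\sigma(k)}$ to completion before moving on. I would then argue the invariant that when it is $F_i$'s turn, every component $F_j$ with $j\in N^+(i)$ still has all its robots at start positions (because the edge $(F_i,F_j)$ forces $F_j$ to come later in $\sigma$), and every $F_j$ with $j\in N^-(i)$ already has all its robots at target positions (because the edge $(F_j,F_i)$ forces $F_j$ earlier), and every other $F_j$ has its robots at some start-or-target position. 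By the very definition of $\E$, this means no robot outside $F_i$ currently sits in any interference configuration $I_{\{i,j\}}$, so the moving robots of $F_i$ — which stay inside $F_i$ — cannot collide with them. Hence $\Pi_i$ runs unobstructed, and chaining all $\ell$ phases yields a global collision-free plan taking $S$ to $T$.

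I expect the main obstacle to be the careful verification that $\G$ is a forest, i.e.\ that the directed-edge definition (which mixes start positions for the $(i,j)$ direction and target positions for the $(j,i)$ direction) genuinely inherits the no-cycle conclusions of Lemmas~\ref{lem: no double interference} and~\ref{lem: no circular interference}; one must check that a cycle in $\G$ would produce a cyclic list of configurations $x_1,\ldots,x_h$ with $x_k\in I_{\{\phi(k),\phi(k+1)\}}$, which Lemma~\ref{lem: no circular interference} forbids because the $\D_2(x_k)$ would then be pairwise disjoint — contradicting the well-separatedness of distinct start/target positions. The remaining bookkeeping — that the invariant is maintained across phases and that ``staying inside $F_i$'' suffices for non-collision with components that are clear of $I_{\{i,j\}}$ — is routine once the forest structure is in hand.
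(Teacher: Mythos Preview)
Your proposal is correct and follows essentially the same route as the paper: invoke Lemma~\ref{lem: one component} for each component, build the directed-interference forest $\G$, use Lemmas~\ref{lem: no double interference} and~\ref{lem: no circular interference} together with well-separatedness to show $\G$ has no undirected cycles, then execute the single-component plans in a topological order of $\G$ while maintaining exactly the invariant you describe. The only wrinkle is a minor slip of phrasing near the end---well-separatedness is what \emph{guarantees} the $\D_2(x_k)$ are pairwise disjoint, and it is Lemma~\ref{lem: no circular interference} that then yields the contradiction---but your intended logic is clearly the right one.
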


\section{Algorithmic details}\label{sec:implementation}
In this section we fill in a few missing details in the description of
our algorithm. Specifically, we present an efficient method for
generating \adg s and describe a technique for generating
configuration-space paths that correspond to edges in the \adg s. We
also consider the complexity of the various subsets of $\F$ used
throughout the algorithm.

\subsection{Partitioning $\F$}
We analyze the combinatorial complexity of
$\F^*\defeq \F \setminus \bigcup_{x \in S \cup T} D^*(x)$ and
$\dD\defeq\bigcup_{x\in S\cup T}D^*(x)$. 
\begin{lemma}\label{lem:complexity_F*}
  The combinatorial complexity of $\F^*$ is $O(m+n)$.
\end{lemma}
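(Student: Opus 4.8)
**Proof proposal for Lemma~\ref{lem:complexity_F*}.**

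The plan is to bound the complexity of $\F$ itself first, and then bound the complexity of $\F^*$ by analyzing how the $m$ collision discs $D^*(x)$ interact with $\F$ and with each other. The free space $\F$ for a single unit disc in the simple polygon $\W$ is the Minkowski erosion of $\W$ by a unit disc; equivalently it is bounded by straight segments (parallel to, and offset inward from, the polygon edges) and circular arcs of radius $1$ (centered at reflex vertices of $\W$). Standard results on the medial axis / offset of a simple polygon give that $\F$ has $O(n)$ total complexity: each edge contributes at most one offset segment and each reflex vertex contributes at most one arc, and the way these pieces fit together along $\partial\F$ is linear. So $\partial\F$ consists of $O(n)$ arcs and segments. (One subtlety: $\F$ may be disconnected into the components $F_1,\dots,F_q$, but the total over all components is still $O(n)$.)

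Next I would add in the collision discs. Each configuration $x\in S\cup T$ contributes one disc $\D_2(x)$ of radius $2$, and $D^*(x)=\D_2(x)\cap F_i$ for the component $F_i\ni x$. The boundary $\partial D^*(x)$ is made up of (i) pieces of the circle $\partial\D_2(x)$ and (ii) pieces of $\partial F_i$. The key structural facts are: the discs $\D_2(x)$ are pairwise disjoint (indeed $\|x-y\|\ge 4$ for distinct $x,y\in S\cup T$, so even the closed discs of radius $2$ are disjoint), hence no two collision-disc \emph{circles} cross each other; and, crucially, each $\D_2(x)$ is ``small'' relative to the surrounding free space. I would argue that each circle $\partial\D_2(x)$ crosses $\partial\F$ only $O(1)$ times — this is where I'd lean on the geometry already developed: because $x\in\F$ lies at distance $\ge 1$ from $\O$, a radius-$2$ circle around $x$ can interact with only a bounded-complexity portion of $\partial\F$ near $x$; more carefully, one shows that the portion of $\partial\F$ within distance $2$ of $x$ has $O(1)$ complexity, or at least that summed over all $x\in S\cup T$ these portions have total complexity $O(n)$ because the well-separation keeps the relevant neighborhoods essentially disjoint. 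Combining: $\partial\F^*$ is obtained from $\partial\F$ by cutting out $m$ regions whose boundaries are $O(1)$-complexity each except for the borrowed pieces of $\partial\F$, and since the pieces of $\partial\F$ are charged only $O(1)$ times in total, the overall complexity is $O(m+n)$.

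The main obstacle I anticipate is making precise the claim that each collision disc $\D_2(x)$ meets $\partial\F$ in $O(1)$ connected pieces — a priori a radius-$2$ circle could weave in and out of a spiky free space many times. The clean way around this is to use well-separation more aggressively: I would show that for each $x\in S\cup T$, the set $D^*(x)$ is a ``fat'' simply-connected region (it contains the unit disc $\D_1(x)\subset\F$), its boundary arcs on $\partial\D_2(x)$ number $O(1)$, and the total number of $\partial\F$-features lying within the union $\bigcup_x\D_2(x)$ is $O(n)$ because the unions of the (pairwise-disjoint) discs $\D_4(x)$ — wait, rather, because along $\partial\F$ the maximal stretches that come within distance $2$ of some $x$ are disjoint up to $O(1)$ overlap, so each $\partial\F$-feature is charged to $O(1)$ discs. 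If a direct argument is awkward, an alternative is an Euler-formula / arrangement count: $\F^*$ is a face-collection in the arrangement of $O(n)$ free-space boundary pieces and $m$ pairwise-disjoint circles, and disjoint circles each crossing $\partial\F$ a bounded number of times contribute only $O(m)$ vertices, giving $O(m+n)$ total. Either route yields the bound; I'd present whichever keeps the well-separation bookkeeping cleanest.
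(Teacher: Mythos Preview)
Your approach has a genuine gap. The central claim---that each circle $\partial\D_2(x)$ crosses $\partial\F$ only $O(1)$ times---is false in general. Take a comb-like workspace whose free-space boundary has $\Theta(n)$ narrow fingers all protruding into a single $\D_2(x)$ from one side; then $D^*(x)$ is still connected (so Lemma~\ref{lem: component self interference} is no help), yet $\partial\D_2(x)$ crosses $\partial\F$ $\Theta(n)$ times. Your fallback arguments don't close the gap either: the charging claim ``each $\partial\F$-feature is charged to $O(1)$ discs'' fails for a single long boundary segment that passes near many well-separated centers, and the Euler-formula route still needs a bound on the number of circle/boundary intersection vertices, which is exactly the unproven quantity.

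The paper avoids this entirely by recasting the problem. It decomposes the obstacle region $\O$ into $O(n)$ convex trapezoids and places a \emph{unit} disc at each start/target position. Because the positions lie in $\F$ (distance $\geq 1$ from $\O$) and are well-separated (mutual distance $\geq 4$), these $O(m+n)$ convex pieces are pairwise interior-disjoint. Then $\F^*$ is precisely the complement of the union of their Minkowski sums with a unit disc, and the Kedem--Livne--Pach--Sharir result~\cite{klps-ujr86} on unions of Minkowski sums of pairwise-disjoint convex sets with a disc gives $O(m+n)$ complexity in one stroke. The key move you are missing is to push the problem back to a union-complexity statement about disjoint convex shapes before inflation, rather than trying to analyze how the already-inflated radius-$2$ discs interact with the already-inflated free-space boundary.
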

\begin{proof}
  We decompose the complement of the workspace polygon into $O(n)$
  trapezoids---this is doable by standard vertical decomposition.  We
  define a set $X$, which consists of the trapezoids, and in addition
  a collection of $O(m)$ unit discs that are centered at the start and
  target positions. We now observe that the regions in $X$ are
  pairwise interior disjoint (and convex). Hence, it is
  known~\cite{klps-ujr86} that the complexity of the union of the
  regions in $X$, each Minkowski-summed with a unit disc, is linear in
  the number of regions plus the sum of the complexities of the
  original regions. As the result of the Minkowski sum operation of
  $X$ with a unit disc is the the area $\F^*$, we conclude that that
  the complexity of $\F^*$ is $O(m+n)$.
\end{proof}

Note that this upper bound still holds if we consider instead of
$\F^*$ the union of
$F_i^*\defeq F_i \setminus \bigcup_{x \in S_i \cup T_i} D^*(x)$, for
all $1\leq i \leq q$.

\begin{lemma}\label{lem:complexity_D^*}
  The combinatorial complexity of
  $\dD\defeq\bigcup_{x\in S\cup T}D^*(x)$, is $O(m+n)$.
\end{lemma}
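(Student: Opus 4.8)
**Proof proposal for Lemma~\ref{lem:complexity_D^*}.**

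The plan is to bound the complexity of $\dD = \bigcup_{x\in S\cup T} D^*(x)$ by relating it to structures whose complexity we already control, rather than analyzing the individual sets $D^*(x)$ directly. First I would observe that each $D^*(x) = \D_2(x)\cap F_i$ is the intersection of a disc of radius $2$ with a free-space component, and that by well-separatedness the sets $D^*(x)$ for distinct $x\in S\cup T$ are pairwise disjoint; moreover each lies in a single component $F_i$. So it suffices to bound $\sum_i$ (complexity of $\bigcup_{x\in S_i\cup T_i} D^*(x)$), and within a fixed $F_i$ the discs $\D_2(x)$ themselves are pairwise disjoint (their interiors do not even meet, again by well-separatedness, since $\|x-y\|\geq 4$ forces $\D_2(x)\cap\D_2(y)=\emptyset$). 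Thus the only combinatorial interaction contributing vertices to $\partial \dD$ is between a circle $\partial\D_2(x)$ and the boundary $\partial F_i$ of the free-space component.

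Next I would recall that $\partial F_i$ consists of arcs of radius-$1$ circles (the Minkowski sum of $\O$ with a unit disc) together with straight edges, and that the total complexity of $\partial\F = \bigcup_i \partial F_i$ is $O(n)$ by the same Davenport--Schinzel / union-of-pseudodiscs argument used in Lemma~\ref{lem:complexity_F*} (or directly: $\F$ is the complement of the Minkowski sum of the obstacle space with a unit disc, which has complexity $O(n)$). Now each circle $\partial\D_2(x)$ is a single convex closed curve, and $\partial F_i$ is (the boundary of) a single simply-connected region by Lemma~\ref{lem:shape of F_i}; a line or unit-circle arc crosses a radius-$2$ circle in at most two points, so charging crossings edge-by-edge, $\partial\D_2(x)$ contributes $O(1)$ new vertices per edge of $\partial\F$ that it actually meets. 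The key point is that, because the discs $\D_2(x)$ are pairwise disjoint, each edge of $\partial\F$ is met by only a bounded "local" set of these circles in any region — but to make the counting clean I would instead bound it globally: the number of vertices of $\dD$ of the form $\partial\D_2(x)\cap\partial F_i$ is at most $2\cdot|S\cup T|$ times... no — better: it equals $\sum_{x}(\text{number of crossings of }\partial\D_2(x)\text{ with }\partial\F)$, and since $\partial\D_2(x)$ is a convex curve and $\partial F_i$ forms the boundary of a simply connected set, a cleaner route is to invoke once more the union-complexity theorem of Kedem, Livne, Pach and Sharir (\cite{klps-ujr86}): the region $\bigcup_{x\in S\cup T}\D_2(x)$ is a union of $O(m)$ pseudodiscs of complexity $O(m)$, the region $\O$ vertically decomposes into $O(n)$ convex (hence pseudodisc-like) pieces, and $\dD = \F \cap \bigcup_x \D_2(x)= \bigl(\bigcup_x\D_2(x)\bigr)\setminus\bigl(\O\oplus\D_1(0)\bigr)$, an intersection of two planar regions of total complexity $O(m+n)$, whose overlay — and hence whose complexity — is $O(m+n)$ provided the two boundary systems form a family of pseudodiscs. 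I would verify that radius-$2$ discs centered at well-separated points together with the unit-radius inflated trapezoids indeed form a pseudodisc family (pairwise, two boundaries cross at most twice), which then yields the $O(m+n)$ bound directly.

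I would then assemble the proof: decompose $\O$ into $O(n)$ interior-disjoint trapezoids as in Lemma~\ref{lem:complexity_F*}, note that $\dD$ is exactly the set of points in some $\D_2(x)$ that are also in $\F$, i.e.\ $\dD = \bigl(\bigcup_{x\in S\cup T}\D_2(x)\bigr)\cap\F$; the boundary of $\dD$ is contained in $\partial\bigl(\bigcup_x\D_2(x)\bigr)\cup\partial\F$, the first piece has complexity $O(m)$ (disjoint discs), the second $O(n)$, and the crossings between the two collections of pieces are bounded by the pseudodisc union-complexity result at $O(m+n)$. I expect the main obstacle to be the bookkeeping needed to phrase $\dD$ as a clean Boolean combination of pseudodisc unions so that \cite{klps-ujr86} applies without having to re-prove a Davenport--Schinzel bound — in particular checking the pseudodisc property for the mixed family $\{\partial\D_2(x)\}\cup\{\text{inflated trapezoid boundaries}\}$, where one must rule out two boundaries crossing four times; this follows because a radius-$2$ circle and the boundary of a "unit-inflated trapezoid" (a convex region with a unit-circle-arc/segment boundary) meet at most twice, but it requires a short case analysis that I would spell out. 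The remark following Lemma~\ref{lem:complexity_F*} — that the bound is unchanged if we restrict to the per-component sets $F_i^*$ — transfers verbatim here as well.
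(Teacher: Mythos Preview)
Your approach is workable but takes a longer route than the paper's. The paper's proof simply observes that $\dD=\F\setminus\F^*$, so every arc on $\partial\dD$ is either an arc of $\partial\F$, an arc of $\partial\F^*$, or a subarc of an edge of $\partial\F$ whose endpoints are vertices of $\partial\F^*$ (the crossing points between $\partial\F$ and the circles $\partial\D_2(x)$ already appear as vertices of $\partial\F^*$). Since $|\partial\F|=O(n)$ and, by the preceding lemma, $|\partial\F^*|=O(m+n)$, the vertex count of $\partial\dD$ is $O(m+n)$ with essentially no new geometric work. In contrast, you re-run the union-complexity argument from scratch on a mixed family.

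Your route is nevertheless correct, and the ``main obstacle'' you anticipate is lighter than you think: you do not need a case analysis to check the pseudodisc property between a radius-$2$ disc and a unit-inflated trapezoid. Write $\D_2(x)=\D_1(x)\oplus\D_1(0)$; then both families---the $\D_2(x)$ and the inflated trapezoids---are Minkowski sums of pairwise interior-disjoint convex sets (the unit discs $\D_1(x)$, which lie in $\F$ and are pairwise disjoint by well-separatedness, together with the obstacle trapezoids) with the \emph{same} unit disc. This is precisely the setting of \cite{klps-ujr86}, which gives the pseudodisc property and the $O(m+n)$ union bound directly. One small point you gloss over: $D^*(x)=\D_2(x)\cap F_i$ is only the part of $\D_2(x)$ in the component containing $x$, so $\dD$ is not literally $(\bigcup_x\D_2(x))\cap\F$; the paper's route via $\F^*$ (which is already defined per-component) avoids having to argue that this discrepancy does not increase the complexity.
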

\begin{proof}
  Denote by $d\defeq\{d_1,d_2,\ldots\}$ the segments that define
  $\partial(\dD)$. Additionally, denote by $f\defeq\{f_1,f_2,\ldots\}$
  and $f^*\defeq\{f^*_1,f^*_2,\ldots\}$ the segments that define
  $\partial(\F),\partial(\F^*)$, respectively. Note that
  $\partial(\dD)$ consists of segments that are elements of $f,f^*$
  and in addition segments that are subsegments of the elements of
  $f$, denoted by $f'\defeq\{f'_1,f'_2,\ldots\}$. Obviously the
  complexity of the segments of $d$, that are elements of $f$ or
  $f^*$, is bounded by $O(m+n)$. It might happen that the segments of
  $f$ will be split into many subsegments in $f'$. However, notice
  that whenever a segment of $f$ is split the endpoints of each
  subsegment consist of vertices of $\partial(\F)$ or
  $\partial(\F^*)$. Moreover, exactly two segments in $\partial(\dD)$
  share an endpoint. Thus, the complexity of $\dD$ is $O(m+n)$.
\end{proof}
\subsection{Generating \adg s}\label{implementation:adjacency_graph}
We consider a specific component $F$ of $\F$ and construct its
\adg~$G$. Denote
$F^*\defeq F \setminus \bigcup_{x \in (S \cup T)\cap F} D^*(x)$. Note
that by the analysis in Section~\ref{sec:multi_cc} we can ignore the
influence of $\D_2(x)$ on connected components in $\F$ that do not
contain $x$. We assume that $F^*$ breaks into $k$ maximal connected
components $F^1,\ldots, F^k$. The construction of $G$, along with the
paths in $F$ that correspond to the edges of $G$, is carried out in
two steps. First, for every $F^i$ we generate the portion of $G$,
denoted by $G^i$, whose vertices represent start and target positions
that touch the boundary of $F^i$. Then, we connect between the various
parts of $G$.

We consider a specific connected component $F^i$ of $F^*$ and describe
how the respective portion of the \adg, namely $G^i$, is generated.
We split the start and target positions that share a boundary with
$F^i$ into two subsets:~$B^i$ are those positions for which the
collision disc intersects the boundary of $F$ and $H^i$ are those
positions for which the collision disc floats inside $F$. See
Figure~\ref{fig:graph_generation}. We first handle positions
in~$B^i$. Consider the outer boundary $\Gamma^i$ of
$F^i\setminus \bigcup_{x\in S\cup T}D^*(x)$. We argue that each
$x\in B^i$ can contribute exactly one piece to $\Gamma^i$.

\begin{lemma}\label{lem:gamma_single}
  If $x\in B^i$ then $\partial(\D_2(x))\cap \partial(F^i)$ consists of
  a single component.
\end{lemma}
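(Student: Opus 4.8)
The plan is to argue by contradiction. Recall that $x \in B^i$ means that $x$ is a start or target configuration whose collision disc $\D_2(x)$ touches the outer boundary $\partial(F)$ of the component $F$, and that $F^i$ is one of the connected components of $F^* = F \setminus \bigcup_{y \in (S\cup T)\cap F} D^*(y)$. The boundary $\partial(\D_2(x))$ is a circle of radius $2$; since $\D_2(x)$ reaches the boundary of $F$, the part of this circle lying inside $F$ is not the whole circle but a union of circular arcs. What we want to show is that exactly one of these arcs borders the particular region $F^i$; equivalently, $\partial(\D_2(x)) \cap \partial(F^i)$ is connected. I would first reduce to understanding $\partial(\D_2(x)) \cap F$, i.e. how the circle of radius $2$ around $x$ is chopped up by the obstacle-induced boundary of the free space near $x$.

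**The key geometric fact.** The heart of the argument is this: because $x$ is a valid (free-space) configuration, $\D_1(x)$ is disjoint from the obstacle space $\O$, so every point of $\O$ within distance $2$ of $x$ lies in the annulus $1 \le \|x - \cdot\| \le 2$. I would show that the set $\partial(\D_2(x)) \setminus F$ — the portion of the radius-$2$ circle that is blocked — is a single arc. Suppose not: then there are two disjoint blocked arcs on $\partial(\D_2(x))$, separated by two free arcs. Each blocked arc is "caused'' by obstacle points in the annulus. Pick obstacle points $w_1, w_2$ witnessing the two blocked arcs. Connecting a point on the circle to the corresponding obstacle witness by a short segment (of length $< 1$), and then routing through the obstacle space, one can build a curve in $\compl(\F) = \Reals^2 \setminus \F$ — using Lemma~\ref{lem:shape of F_i}, all of $\compl(\F)$ near these points is in the single unbounded component $C_\O$ together with the "thickened'' obstacle. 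Actually the cleaner route: the two free arcs of $\partial(\D_2(x))$ lie in $F$ (they are within distance $2$ but more than distance... no — I need them in $F$, which requires them to be $\ge 1$ from $\O$; that is not automatic). So instead I would argue directly about $\partial(\D_2(x)) \cap F$ and use connectivity of $F$ (Lemma~\ref{lem:shape of F_i}: $F$ is simply connected) to conclude that the complement arcs cannot separate.

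**Handling the passage from $F$ to $F^i$.** Once I know $\partial(\D_2(x)) \cap F$ is a single arc $\alpha$ (the "free portion'' of the collision circle), I must show that $\alpha$ meets only one of the sub-components $F^1, \dots, F^k$ — or rather, that the part of $\alpha$ on $\partial(F^i)$ is connected. Here is where I would use well-separatedness and Lemma~\ref{lem: no double interference}: the other collision discs $D^*(y)$ for $y \in (S\cup T)\cap F$, $y \ne x$, satisfy $\D_2(x) \cap \D_2(y) \ne \emptyset$ is impossible unless... wait, well-separatedness gives $\|x-y\| \ge 4$, so $\D_2(x)$ and $\D_2(y)$ are disjoint (in fact at distance $\ge 0$... $\|x-y\|\ge 4 > 2+2$ is false, $4 = 2+2$, so their closures touch at most at... no, $\ge 4$ means the open discs of radius $2$ are disjoint). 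Hence the other collision discs do not touch $\partial(\D_2(x))$, so they cannot subdivide the arc $\alpha$ at all. Therefore $\alpha$ borders a single component of $F^*$, and since $x \in B^i$ that component is $F^i$; thus $\partial(\D_2(x)) \cap \partial(F^i) = \alpha$ is connected. The main obstacle I anticipate is the first step — rigorously proving that the obstacle-blocked part of the circle $\partial(\D_2(x))$ is a single arc; this needs the simple-connectivity of $F$ (no holes) plus the "segment-to-obstacle'' trick from Lemma~\ref{lem:shape of F_i}, and care because $\partial(\D_2(x))$ can dip in and out of $F$ along its boundary $\partial(F)$ rather than only through obstacle-blocked regions, so I would phrase everything in terms of $\compl(\F)$ being connected rather than tracking obstacles directly.
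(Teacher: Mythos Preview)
Your two-step decomposition is natural, and Step~2 is correct: well-separation gives $\D_2(x)\cap\D_2(z)=\emptyset$ for every other $z\in S\cup T$, so no other collision disc meets $\partial(\D_2(x))$, and each connected arc of $\partial(\D_2(x))\cap F$ therefore borders exactly one component of $F^*$. The gap is Step~1. The claim that $\partial(\D_2(x))\setminus F$ (equivalently $\partial(\D_2(x))\cap F$) is a single arc is \emph{false} in general. Take a large simple polygon with two thin reflex notches entering from opposite sides, and place $x$ between them so that both of the resulting fingers of $\compl(\F)$ cross the circle $\partial(\D_2(x))$ without meeting inside $\D_2(x)$. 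Then $F$ is simply connected and $D^*(x)$ is connected (so Lemmas~\ref{lem:shape of F_i} and~\ref{lem: component self interference} hold), yet $\partial(\D_2(x))\cap F$ consists of two disjoint arcs. No argument from simple-connectivity will rescue Step~1, because the statement itself does not hold.

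What is actually true---and what the paper proves---is that \emph{distinct} arcs of $\partial(\D_2(x))\cap F$ border \emph{distinct} components $F^j$. The paper does this by a direct Jordan-curve argument: given $y,y'$ on two components of $\partial(\D_2(x))\cap\partial(F^i)$, join them by a path $\pi_{yy'}$ through $F^i$ (hence outside $D^*(x)$) and join each to $x$ by paths $\pi_{xy},\pi_{xy'}$ inside $D^*(x)$ (Lemma~\ref{lem: component self interference}). The resulting closed curve lies in $F$ and meets the circle $\partial(\D_2(x))$ only at $y$ and $y'$, so it encloses one of the two circular arcs from $y$ to $y'$. By your own Step~2 reasoning that enclosed arc cannot lie entirely in $F$ (otherwise it would all border $F^i$, putting $y,y'$ in the same component), so the enclosed region contains forbidden space---contradicting simple-connectivity of $F$. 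Your Step~2 together with this loop argument yields a complete proof; Step~1 as you formulated it cannot.
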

\begin{proof}
  By contradiction, assume that the intersection consists of two
  maximal connected components. Denote by $y,y'$ two configurations on
  the two components. As $F^i$ consists of a single connected
  component of $F$ there exists a path $\pi_{yy'}\subset F$ from $y$
  to $y'$. Additionally, as $y,y',x$ belong to the same connected
  component of $\F$ there exist two paths---$\pi_{xy}$ from $x$ to $y$
  and $\pi_{xy'}$ from $x$ to $y'$---that lie entirely in
  $D^*(x)$. Thus, the area that is bounded by the three paths
  $\pi_{yy'},\pi_{xy}, \pi_{xy'}$ contains a patch of forbidden space,
  which contradicts the fact the our workspace is a simple polygon.
\end{proof}

For every $x\in B^i$ we arbitrarily select a representative point
$\beta^i(x)\in \partial(\D_2(x))\cap F^i$. We order the points
$\beta^i(x)$ clockwise around $\Gamma^i$, and store them in a circular
list $\L^i$. We now incorporate the remaining start and target
positions $H^i$, namely those positions $x$ for which
$D_2(x)\cap \partial(F)=\emptyset$. Each position in $H^i$ will be
connected either to $\Gamma^i$ or to the boundary of a collision disc
of another position in $H^i$ as follows.  For each $x\in H^i$ we shoot
a vertical ray upwards until it hits $\partial(F^i)$. Denote the point
where the ray hits $\partial(F^i)$ by $c$. If
$c\in \partial(\D_2(x'))$ for some $x'\in H^i, x'\neq x$ then an edge
between $x$ and $x'$ is added to $G^i$.  Otherwise, we let
$\beta^i(x)\defeq c$ and insert it into the circular list $\L^i$
representing the points $\beta^i(x)$ along $\Gamma^i$ collected so
far. After all positions in $H^i$ have been handled in this manner,
for each pair of consecutive points $\beta^i(x'),\beta^i(x'')$ in
$\L^i$ (along $\Gamma^i$) we add an edge in $G^i$ between the vertices
$x'$ and $x''$. (Notice that some of the positions $x$ whose
$\beta^i(x)$ appear in $\L^i$ belong to $H^i$; for example $s_3$ in
Figure~\ref{fig:graph_generation}.)  Finally, the connection between
portions of the \adg~ that represent different parts of $F^*$ is
achieved through positions shared between two sets $B^i,B^j$, for
$i\neq j$.

\subsection{Transforming graph edges into paths in the free space}
There are three different types of transformations depending on how
the edge was created. Let $(x,x')$ be an edge in $G_i$. Consider
Figure~\ref{fig:graph_generation} for an illustration. (i) If both $x$
and $x'$ belong to $H^i$ (see $(s_4,s_3)$ in the figure) then the path
simply consists of the two straight-line segments $\overline{xc}$ and
$\overline{cx'}$. For the remaining two cases we note that if either
vertex, say $x$, is in $B^i$, then part of the path is a simple curve
connecting $x$ to $\beta^i(x)$ within $D^*(x)$ (see the red curves
from $s_1$ and $t_2$ in the figure). We denote this curve by
$\delta_x$. (ii) $x,x'\in B^i$ and the points $\beta^i(x)$ and
$\beta^i(x')$ are consecutive along $\Gamma^i$ (see $(t_1,t_2)$ in the
figure). The path corresponding to the edge $(x,x')$ in this case is a
concatenation of three sub-paths: $\delta_x$, the portion of
$\Gamma^i$ between $\beta^i(x)$ and $\beta^i(x')$ (not passing though
the boundary of any other collision disc), and $\delta_{x'}$. (iii)
$x\in H^i$ and $x'\in B^i$ (see $(s_3,s_2)$ in the figure). The path
is again a concatenation of three paths: the line segment
$\overline{x\beta^i(x)}$, the portion of $\Gamma^i$ between
$\beta^i(x)$ and $\beta^i(x')$ (not passing though the boundary of any
other collision disc), and $\delta_{x'}$.

Notice that for all path types above if a robot $r$ moves from $x$ to
$x'$, $x'$ is not occupied, and all other robots occupy positions only
at $S\cup T\setminus\{x,x'\}$, $r$ will not collide with any other
robot during the motion.

\begin{figure}[t!]
  \vspace{-20pt} \centering \subfloat[ ] {
    \includegraphics[height=0.6\columnwidth] {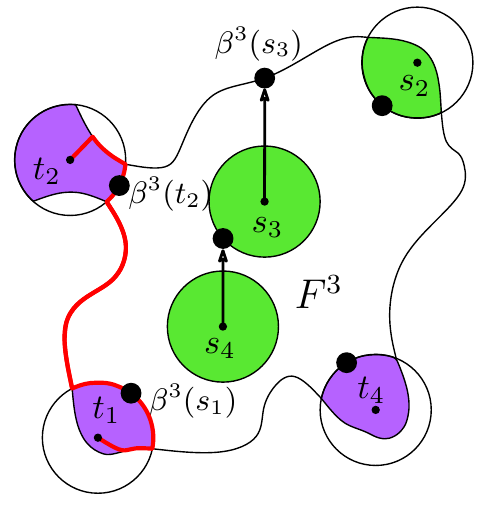}}
  \subfloat[ ]{\raisebox{5mm}{\includegraphics[height=0.3\columnwidth]
      {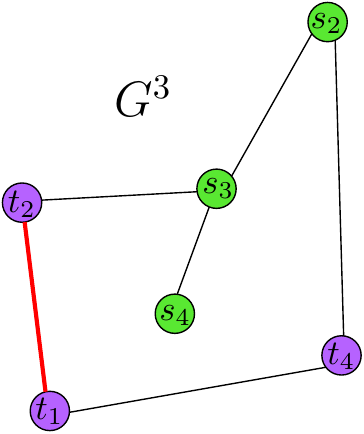}}}
  \caption{(a) An illustration of a component $F^3$ of $F^*$ and the
    structures used for generating the relevant portion of the
    \adg. The boundary positions of $F^3$ consist of
    $B^3:= \{s_2,t_1,t_2,t_4\}$, while the hole positions consist of
    $H^3 := \{s_3,s_4\}$. For every $x\in H^3$ its boundary
    representative $\beta^3(x)$ is illustrated as a large black dot. A
    path between $t_1$ and $t_2$ is illustrated in red. (b) The \adg~
    $G^3$ induced by $F^3$.}
  \label{fig:graph_generation}
\end{figure}

\subsection{Complexity Analysis}
We provide complexity analysis of our algorithm and show that a
solution to the problem can be produced within
$O\left((m+n)\log (m+n)+mn+m^2\right)$ operations, which can be
rewritten as $O\left(n\log n+mn+m^2\right)$.

Recall that the pebble problem solver (Section~\ref{sec:single_cc})
operates in $O(m)$ phases, where in each phase a leaf node is removed
from the spanning tree of $G$. We show, using
Lemma~\ref{lem:complexity_F*} and Lemma~\ref{lem:complexity_D^*}, that
each phase can be transformed into a set of movements for the robots
whose combinatorial complexity is $O(m+n)$. The crucial observation is
that in one phase each edge of the motion graph is used at most
once. Thus the set of robot movements in one phase is bounded by the
complexity of the movements corresponding to all the edges in the
graph together. These comprise $O(m)$ line segments, portions of the
boundaries $\Gamma^i$ (whose complexity is $O(m+n)$ by
Lemma~~\ref{lem:complexity_F*}), and the paths $\delta_x$ inside the
$D^*(x)'s$ (whose complexity is $O(m+n)$ by
Lemma~\ref{lem:complexity_D^*}). A path of the latter type,
$\delta_x$, might be traversed twice: once for reaching $x$ and once
for leaving $x$. However asymptotically all the movements together
have complexity $O(m+n)$.

We note that the cost of generating $\F$, along with its partitions
$\F^*$ and $\dD$, is bounded by $O\left((m+n)\log(m+n)\right)$, due
to~\cite{klps-ujr86}.  We also note that deciding whether a solution
exists for a certain collection of start and target positions can be
carried out in $O((m+n)\log n)$ as follows. We first compute $\F$ in
$O(n\log n)$ time, and within the same time preprocess it for
efficient point location. Then we query the resulting structure with
the $m$ points in $S\cup T$, in $O(\log n)$ time each, and verify that
in every component $F_i$ of $\F$ it holds that $|S_i|=
|T_i|$. Thus, we have the following theorem.

\begin{theorem}
  Let $\W$ be a simple polygon with $n$ vertices and let
  $S=\{s_1,\ldots,s_m\}, T=\{t_1,\ldots,t_m\}$ be two sets of $m$
  points in $\W$. Additionally, assume that for every two distinct
  element $x,x'$ of $S\cup T$ it holds that $\|x-x'\|\geq 4$.  Then,
  given $m$ unlabeled unit disc robots, our algorithm can determine
  whether a path moving the $m$ robots from $S$ to $T$ exists in
  $O\left((m+n)\log n\right)$ time. If a path exists the algorithm
  finds it in $O\left(n\log n+mn+m^2\right)$ time.
\end{theorem}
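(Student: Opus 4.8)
The theorem is essentially a bookkeeping result that gathers together the structural lemmas (Lemmas~\ref{lem: component self interference}--\ref{lem: no circular interference}), the single-component algorithm and its correctness (Lemma~\ref{lem: one component}), the ordering of components via the directed-interference forest (Theorem~\ref{thm: many components}), and the combinatorial-complexity bounds (Lemmas~\ref{lem:complexity_F*} and~\ref{lem:complexity_D^*}). My plan is to split the argument into the two claims of the theorem: (i) deciding solvability in $O((m+n)\log n)$ time, and (ii) producing an actual motion plan in $O(n\log n+mn+m^2)$ time.

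**Deciding solvability.** First I would compute the free space $\F$ of a single unit disc in~$\W$; since $\F$ is the Minkowski sum of $\O=\Reals^2\setminus\W$ (decomposed into $O(n)$ trapezoids) with a unit disc, by~\cite{klps-ujr86} it has complexity $O(n)$ and can be computed in $O(n\log n)$ time. Within the same time bound I would preprocess $\F$ for planar point location. Then I would locate each of the $m$ points of $S\cup T$ in $O(\log n)$ time apiece, tallying $|S_i|$ and $|T_i|$ for each component $F_i$. By the discussion preceding Lemma~\ref{lem: one component}, a solution exists if and only if $|S_i|=|T_i|$ for every~$i$ (necessity is trivial; sufficiency is exactly Theorem~\ref{thm: many components}). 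This phase costs $O(n\log n)+O(m\log n)=O((m+n)\log n)$.

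**Producing a plan.** Assuming a solution exists, I would (a) compute $\F$, together with the subdivisions $\dD=\bigcup_{x\in S\cup T}D^*(x)$ and $\F^*=\F\setminus\dD$, in $O((m+n)\log(m+n))$ time via~\cite{klps-ujr86}; by Lemmas~\ref{lem:complexity_F*} and~\ref{lem:complexity_D^*} each has complexity $O(m+n)$. (b) Build the directed-interference forest $\G$ and a topological order $\sigma$ of its components; this is linear in the size of $\G$, hence $O(m+n)$. (c) For each component $F_i$, build its motion graph $G_i$ by the two-step procedure of Section~\ref{implementation:adjacency_graph} (sorting the representatives $\beta^i(x)$ around each $\Gamma^i$ costs $O(m\log m)$ overall, plus $O(m+n)$ for walking the boundaries and shooting the vertical rays). (d) Run Kornhauser's pebble solver (Lemma~\ref{lem:pebble}) on each $G_i$ in $O(|S_i|^2)$ time, for a total of $\sum_i O(m_i^2)=O(m^2)$ pebble moves, executed in the order $F_{\sigma(1)},\dots,F_{\sigma(\ell)}$. (e) Convert each pebble move along an edge of $G_i$ into the corresponding free-space path using the three path-types of Section~\ref{sec:implementation}. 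The key accounting step here is that within a single pebble phase each edge of $G_i$ is traversed at most once (and each curve $\delta_x$ at most twice), so the geometric description of all robot motions in one phase has complexity $O(m+n)$; summed over the $O(m)$ phases this gives $O(mn+m^2)$ total path complexity, which also dominates the time to generate the paths. Correctness of the combined plan follows from Observation~\ref{le:adg}, Lemma~\ref{lem:pebbles to robots}, and the ordering argument of Section~\ref{sec:multi_cc} (which guarantees that when $F_{\sigma(j)}$ is processed, no robot sits in any interfering configuration $I_{\{\sigma(j),\cdot\}}$). Adding the phases: $O((m+n)\log(m+n))+O(m^2)+O(mn)=O(n\log n+mn+m^2)$, using $m=O(n^2)$ absorbed appropriately—more simply, $\log(m+n)=O(\log(mn))$ and the $mn$ term dominates any $m\log m$ contribution.

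**Main obstacle.** The routine parts are the point location, the topological sort, and the pebble solver, all quoted from earlier results. The delicate point is the complexity bound on the emitted paths: one must argue carefully that Kornhauser's algorithm never routes two pebbles over the same edge within one phase, so that the per-phase geometric output is bounded by the total edge-complexity $O(m+n)$ of the motion graph rather than by something larger, and that a curve $\delta_x$ inside a collision disc $D^*(x)$ is re-used at most a constant number of times per phase. Once that is in place, multiplying by the $O(m)$ phases and adding the one-time $O((m+n)\log(m+n))$ preprocessing cost yields the stated $O(n\log n+mn+m^2)$ bound.
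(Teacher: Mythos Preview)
Your proposal is correct and follows essentially the same argument as the paper: the decision phase via computing $\F$, point location, and checking $|S_i|=|T_i|$; the construction phase via building $\F^*$ and $\dD$ in $O((m+n)\log(m+n))$ time, running the pebble solver in $O(m)$ phases, and the key accounting observation that each motion-graph edge is used at most once per phase (and each $\delta_x$ at most twice), giving $O(m+n)$ per phase and $O(mn+m^2)$ overall. The only quibble is your aside that ``$m=O(n^2)$''---this need not hold, but it is also unnecessary: the term $(m+n)\log(m+n)$ is absorbed simply because $m\log m=O(m^2)$ and $n\log n$ is already present.
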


\section{Separation and solvability}\label{sec:solvability}
Our results from the previous section imply that a separation distance
of~$4$ ensures that the problem always has a solution, assuming that
each connected component contains the same number of start and target
positions. However, for smaller separation values this does not have
to be the case.

We define a magnitude $\lambda$ to be the minimum separation between
start and target positions of unlabeled discs in a simple polygon,
that guarantees that a solution always exists assuming that each
connected component contains the same number of start and target
positions. Our work in the previous sections shows that
$\lambda\leq 4$. The following proposition provides a lower bound for
the value of $\lambda$.
\begin{propo}\label{prop:lambda}
  $\lambda\geq 4\sqrt{2}-2(\approx 3.646)$.
\end{propo}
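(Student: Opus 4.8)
The goal is to show $\lambda \geq 4\sqrt{2}-2$, i.e., to exhibit an instance where the start and target positions are well-separated with separation distance strictly less than $4\sqrt{2}-2$ (approaching it), each connected component of the free space has equal numbers of start and target positions, and yet no collision-free motion plan exists.

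The plan is to construct a narrow corridor or bottleneck inside a simple polygon where two unit discs must ``pass'' each other, but cannot. First I would design the workspace: take a simple polygon consisting of a corridor just wide enough for one unit disc to travel (width just under $2$ at the narrowest), with two small chambers at the ends. Place one robot's start position $s_1$ and one robot's target $t_1$ so that the robot must traverse the corridor, and place a second robot at $s_2 = t_2$ (so it must return to its own position) sitting in a chamber such that in order for robot~1 to get through, robot~2 would have to vacate its chamber --- but robot~2 has nowhere to go, because moving it out requires pushing it past the narrow part which robot~1 is occupying, or into a dead-end region that isn't large enough to hold it while robot~1 passes. The separation $\|s_1 - t_1\|, \|s_1 - s_2\|$, etc., is then made as large as the geometry permits; the claim is that the tightest such bottleneck construction forces the critical separation to be exactly $4\sqrt{2}-2$ in the limit.

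The key calculation is geometric: in the bottleneck, consider the configuration where robot~1 is squeezed as far as possible to one side while robot~2 tries to slip past on the other side. Two unit discs that must simultaneously fit ``side by side'' in a region need their centers at distance $\geq 2$, and each center at distance $\geq 1$ from the walls, so the region must have width $\geq 4$ at that point; but if the corridor has width $<4$ everywhere the two discs can never be simultaneously present in overlapping cross-sections. The number $4\sqrt{2}-2$ arises from a right-angle bend or a diagonal pinch: if the passage turns a corner, the effective clearance a disc needs is governed by the distance from the reflex vertex, and placing $s_2 = t_2$ at distance just under $4\sqrt{2}-2$ from $s_1$ (and from the far end) can be arranged while still blocking the passage. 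I would set up coordinates, place the reflex vertex at the origin, compute the locus of valid centers for robot~2 given that robot~1 is at the pinch point, and show this locus is empty unless the separation drops below $4\sqrt{2}-2$; the $\sqrt{2}$ comes from a $45^\circ$ diagonal in this configuration.

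The main obstacle will be verifying \emph{unsolvability} rigorously --- it is easy to believe a bottleneck blocks the robots, but one must rule out \emph{all} motion plans, including clever ones that move robot~2 first, use the chambers in unexpected ways, or exploit that the discs are open sets (so boundary-touching is allowed). The clean way is a topological/connectivity argument in the two-robot configuration space $\F \times \F$ minus the collision set $\{(x,y) : \|x-y\|<2\}$: show that the connected component of this space containing $(s_1, s_2)$ does not contain any point of the form $(t_1, \cdot)$, by identifying a ``separating'' obstacle in configuration space (a set of configurations that must be crossed, but all of which are collisions). Concretely, one argues that any path from $(s_1,s_2)$ to a goal configuration must at some moment have robot~1 at the pinch cross-section, and at that moment robot~2 is confined by connectivity to the same side it started on, which is the side robot~1 must reach --- a contradiction once the widths are chosen so that the two discs cannot coexist there. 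Making the ``must cross the pinch'' step precise (an intermediate-value / separation argument on the corridor) and checking the chamber sizes leave robot~2 genuinely trapped are the delicate points; everything else is bookkeeping with distances.
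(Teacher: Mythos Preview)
Your proposal is a plan, not a proof: you never actually build the workspace, never compute a single distance, and never derive the number $4\sqrt{2}-2$. ``I would set up coordinates\dots and show this locus is empty'' is exactly the step that must be carried out, and without it there is no lower bound at all. The configuration-space separation argument you sketch for unsolvability is reasonable machinery, but it is overkill here and in any case cannot be invoked until the polygon and the positions are pinned down.

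The paper's argument is much more concrete and rather different in shape from what you outline. There is no corridor-with-chambers and no $s_2=t_2$ trick. Instead, both start positions $s_1,s_2$ sit side by side (separation $2+\rho$) in a pocket whose only exit is a width-$2$ corridor beneath them; the targets are far down the corridor. Since the corridor admits only one disc at a time, one robot---say $r_1$---must leave first while $r_2$ is still at $s_2$. To reach the corridor, $r_1$ must round the reflex vertex $B$ at the corridor mouth; its center then sweeps a circular arc at distance~$1$ from $B$, so avoiding $r_2$ forces $\|B-s_2\|\ge 3$. Dropping a perpendicular from $s_2$ to the edge through $B$ (foot $M$, with $\|s_2-M\|=1$) gives $\|B-M\|\ge\sqrt{3^2-1^2}=2\sqrt{2}$. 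Reading off the total width of the pocket,
\[
4+\rho \;=\; \|A-D\| \;=\; \|A-B\|+\|B-M\|+\|M-D\| \;\ge\; \Bigl(1+\tfrac{\rho}{2}\Bigr)+2\sqrt{2}+1,
\]
which yields $\rho\ge 4\sqrt{2}-4$ and hence separation $2+\rho\ge 4\sqrt{2}-2$. The unsolvability argument is then the one-liner ``neither robot can be the first to exit'', with no configuration-space topology needed. Your sketch has the right instinct---a reflex vertex and a distance-$3$ constraint are indeed where the $\sqrt{2}$ comes from---but you need to \emph{exhibit} the polygon and \emph{do} this computation rather than promise it.
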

\begin{proof}
  We describe a concrete example where the value of $\lambda$ has to
  be greater than $4\sqrt{2}-2$ to guarantee solvability. See
  Figure~\ref{fig:small_separation}. The two robots $r_1,r_2$, placed
  at $s_1,s_2$, respectively, need to leave through the corridor of
  width $2$ located below their initial positions in order to reach
  $t_1,t_2$. Denote the separation between the start positions by
  $2+\rho$. We wish to find the maximal value $\rho$ for
  which the problem does not have a solution.

  It is clear that the two robots cannot enter the corridor
  simultaneously. Therefore, if a solution exists there also exists
  one where $r_2$ stays put in $s_2$ until $r_1$ is well down the
  corridor. We describe a path of $r_1$ in which it maintains
  a maximal distance from $s_2$: first $r_1$ has to slide along the
  edge $AB$, when arriving at $B$ it revolves around it and then
  slides down along the other edge connected to $B$. This path has a
  circular arc which is induced by the rotation about $B$ (depicted by
  the red arrow).  The dashed red arc depicts the trace of the
  boundary of $r_1$ throughout this motion.

  Denote by $M$ the tangent point between $r_2$ at $s_2$ and the edge
  $CD$. Note that the robots would collide when $r_1$ is rotated
  around $B$ in case that $\|B-s_2\|<3$. Thus, it must be that
  $\|B-M\|\geq 2\sqrt{2}$. As the segment $AD$ consists of the
  subsegments $AB,BM,MD$, it follows that
  \begin{dmath*}[compact]
    4+\rho  = {\|A-D\|} \\ = {\|A-B\|+\|B-M\|+\|M-D\|} \\ \geq
    {1+\frac{\rho}{2}+2\sqrt{2}+1} \\ = {2+2\sqrt{2}+\frac{\rho}{2}},
  \end{dmath*}
  \noindent and we conclude that $\rho\geq 4\sqrt{2}-4$, and
  also $\lambda \geq \rho + 2 =4\sqrt{2}-2$.
\end{proof}

\begin{figure}[h]
  \centering
  \includegraphics[width=0.8\columnwidth]
  {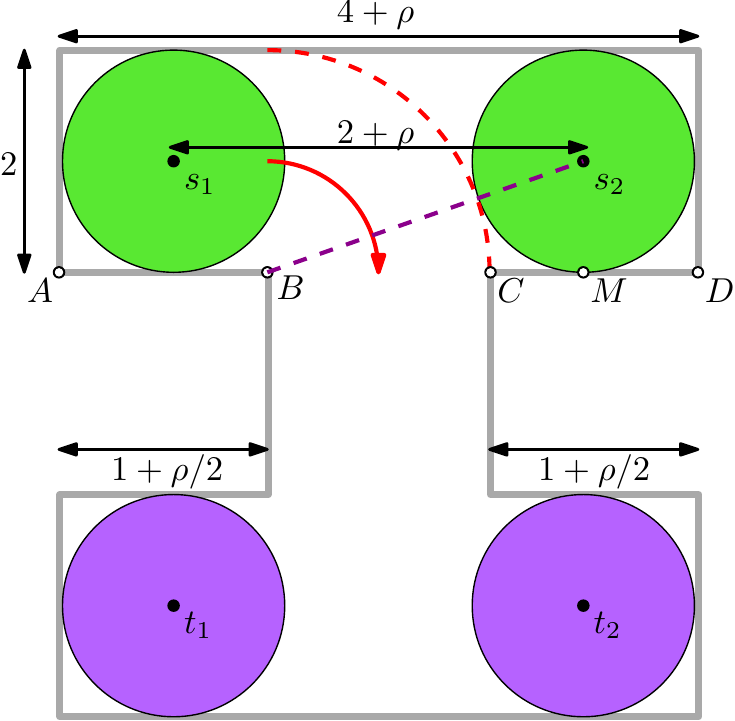}
  \caption{It follows from our paper that when the start and goal
    positions are well separated, then there is always a solution when
    each free-space component has the same number of start and goal
    positions. However, it is not true when the separation condition
    is not met. In the given example, the two robots need to leave the
    through a corridor of width $2$ located below their initial
    positions. This is only possible when $\rho\geq 4\sqrt{2}-2$, as
    otherwise the robots will not be able to get to the corridor
    without colliding with each other. See proof of
    Proposition~\ref{prop:lambda} for more
    details.}\label{fig:small_separation}
\end{figure}

\section{Open problems and future work}\label{sec:discussion} 
We have studied a basic variant of the multi-robot motion-planning
problem, where the goal is to find collision-free motions that bring a
given set of indistinguishable unit discs in a simple polygon to a
given set of target positions. Under the condition that the start and
target positions are separated from each other by a distance of at
least $4$, we developed an algorithm that solves the problem in time
polynomial in the complexity of the polygon as well as in the number
of discs: quadratic in the number of robots and near-linear in the
complexity of the polygon.

Our result should be contrasted with the labeled counterpart of the
problem, which is \np-hard~\cite{sy-snp84}. In the \np-hardness proof
the discs have different radii, however, and there is no restriction
on the separation of the start and target position. Thus one of the
main open questions resulting from our study is to settle the
complexity of the unlabeled problem without this extra separation
condition. Very recently, we have made progress in this direction, by
showing that the general unlabeled problem is
\pspace-hard~\cite{sh-pspace14}. It should be noted that this proof
applies to a slightly different setting that consists of unit-square
robots translating amidst polygonal obstacles.

A natural question that arises is what happens when the separation of
start or target positions is equal to $2+\rho$, where
$0<\rho < 2$? Would it be possible to design an algorithm,
whose running time is polynomial in $m$ and $n$, and also depends in
some manner on $\rho$? We believe that the work by Alt et
al.~\cite{alt-approx92} would be a good starting point for this line
of work. Following our discussion in Section~\ref{sec:solvability}, it
will also be interesting to find the exact threshold above which a
problem is always solvable.
 
\bibliographystyle{IEEEtran}
\bibliography{bibliography}

\begin{thebibliography}{10}
\providecommand{\url}[1]{#1}
\csname url@samestyle\endcsname
\providecommand{\newblock}{\relax}
\providecommand{\bibinfo}[2]{#2}
\providecommand{\BIBentrySTDinterwordspacing}{\spaceskip=0pt\relax}
\providecommand{\BIBentryALTinterwordstretchfactor}{4}
\providecommand{\BIBentryALTinterwordspacing}{\spaceskip=\fontdimen2\font plus
\BIBentryALTinterwordstretchfactor\fontdimen3\font minus
  \fontdimen4\font\relax}
\providecommand{\BIBforeignlanguage}[2]{{%
\expandafter\ifx\csname l@#1\endcsname\relax
\typeout{** WARNING: IEEEtran.bst: No hyphenation pattern has been}%
\typeout{** loaded for the language `#1'. Using the pattern for}%
\typeout{** the default language instead.}%
\else
\language=\csname l@#1\endcsname
\fi
#2}}
\providecommand{\BIBdecl}{\relax}
\BIBdecl

\bibitem{ss-pm2}
J.~T. Schwartz and M.~Sharir, ``On the piano movers problem: {II}. {G}eneral
  techniques for computing topological properties of real algebraic
  manifolds,'' \emph{Advances in applied Mathematics}, vol.~4, no.~3, pp.
  298--351, 1983.

\bibitem{ss-pm3}
------, ``On the piano movers problem: {III}. {C}oordinating the motion of
  several independent bodies,'' \emph{International Journal of Robotics
  Research}, vol.~2, no.~3, pp. 46--75, 1983.

\bibitem{c-cms84}
C.-K. Yap, ``Coordinating the motion of several discs,'' Courant Institute of
  Mathematical Sciences, Michigan State University, New York, Tech. Rep., 1984.

\bibitem{ss-cmp91}
M.~Sharir and S.~Sifrony, ``Coordinated motion planning for two independent
  robots,'' \emph{Annals of Mathematics and Artificial Intelligence}, vol.~3,
  no.~1, pp. 107--130, 1991.

\bibitem{avbsv-mpfmr}
B.~Aronov, M.~de~Berg, A.~F. van~der Stappen, P.~\v{S}vestka, and J.~Vleugels,
  ``Motion planning for multiple robots,'' \emph{Discrete {\&} Computational
  Geometry}, vol.~22, no.~4, pp. 505--525, 1999.

\bibitem{bslm-cppmr}
J.~van~den Berg, J.~Snoeyink, M.~C. Lin, and D.~Manocha, ``Centralized path
  planning for multiple robots: Optimal decoupling into sequential plans,'' in
  \emph{Robotics: Science and Systems (RSS)}, 2009.

\bibitem{hss-cmpmio}
J.~E. Hopcroft, J.~T. Schwartz, and M.~Sharir, ``On the complexity of motion
  planning for multiple independent objects; {PSPACE}-hardness of the
  ``{W}arehouseman's problem'','' \emph{International Journal of Robotics
  Research}, vol.~3, no.~4, pp. 76--88, 1984.

\bibitem{sy-snp84}
P.~G. Spirakis and C.-K. Yap, ``Strong {NP}-hardness of moving many discs,''
  \emph{Information Processing Letters}, vol.~19, no.~1, pp. 55--59, 1984.

\bibitem{kslo-prm}
L.~E. Kavraki, P.~\v{S}vestka, J.-C. Latombe, and M.~H. Overmars,
  ``Probabilistic roadmaps for path planning in high dimensional configuration
  spaces,'' \emph{IEEE Transactions on Robotics and Automation}, vol.~12,
  no.~4, pp. 566--580, 1996.

\bibitem{l-rert}
J.~J. Kuffner and S.~M. Lavalle, ``{RRT}-{C}onnect: An efficient approach to
  single-query path planning,'' in \emph{International Conference on Robotics
  and Automation (ICRA)}, 2000, pp. 995--1001.

\bibitem{sl-upp}
G.~Sanchez and J.-C. Latombe, ``Using a {PRM} planner to compare centralized
  and decoupled planning for multi-robot systems,'' in \emph{International
  Conference on Robotics and Automation (ICRA)}, 2002.

\bibitem{hh-hmp}
S.~Hirsch and D.~Halperin, ``Hybrid motion planning: Coordinating two discs
  moving among polygonal obstacles in the plane,'' in \emph{Workshop on the
  Algorithmic Foundations of Robotics (WAFR)}.\hskip 1em plus 0.5em minus
  0.4em\relax Springer, 2002, pp. 239--255.

\bibitem{SHH12}
O.~Salzman, M.~Hemmer, and D.~Halperin, ``On the power of manifold samples in
  exploring configuration spaces and the dimensionality of narrow passages,''
  \emph{to appear, Workshop on the Algorithmic Foundations of Robotics (WAFR)},
  2012.

\bibitem{ssh-fne13}
K.~Solovey, O.~Salzman, and D.~Halperin, ``Finding a needle in an exponential
  haystack: Discrete {RRT} for exploration of implicit roadmaps in multi-robot
  motion planning,'' \emph{CoRR}, vol. abs/1305.2889, 2013.

\bibitem{so-cppmr}
P.~\v{S}vestka and M.~H. Overmars, ``Coordinated path planning for multiple
  robots,'' \emph{Robotics and Autonomous Systems}, vol.~23, pp. 125--152,
  1998.

\bibitem{wc-cmpp11}
G.~Wagner and H.~Choset, ``M*: {A} complete multirobot path planning algorithm
  with performance bounds,'' in \emph{International Conference on Intelligent
  Robots and Systems (IROS)}, 2011, pp. 3260--3267.

\bibitem{wmc-ppp12}
G.~Wagner, M.~Kang, and H.~Choset, ``Probabilistic path planning for multiple
  robots with subdimensional expansion,'' in \emph{International Conference on
  Robotics and Automation (ICRA)}, 2012, pp. 2886--2892.

\bibitem{kh-pppi05}
S.~Kloder and S.~Hutchinson, ``Path planning for permutation-invariant
  multi-robot formations,'' in \emph{ICRA}, 2005, pp. 1797--1802.

\bibitem{sh-kcmr}
K.~Solovey and D.~Halperin, ``$k$-color multi-robot motion planning,''
  \emph{International Journal of Robotics Research}, 2013, in press (already
  appeared on-line).

\bibitem{tmk-cap13}
M.~Turpin, N.~Michael, and V.~Kumar, ``Concurrent assignment and planning of
  trajectories for large teams of interchangeable robots,'' in
  \emph{International Conference on Robotics and Automation (ICRA)}, 2013, pp.
  842--848.

\bibitem{bdp-sdp08}
S.~Bereg, A.~Dumitrescu, and J.~Pach, ``Sliding disks in the plane,''
  \emph{International Journal of Computational Geometry and Applications},
  vol.~18, no.~5, pp. 373--387, 2008.

\bibitem{dj-rdp09}
A.~Dumitrescu and M.~Jiang, ``On reconfiguration of disks in the plane and
  related problems,'' \emph{Computational Geometry: Theory and Applications},
  vol.~46, no.~3, pp. 191 -- 202, 2013.

\bibitem{g-smg84}
O.~Goldreich, ``Shortest move-sequence in the generalized 15-puzzle is
  {NP}-hard,'' \emph{Manuscript, Laboratory for Computer Science, MIT}, vol.~1,
  1984.

\bibitem{gh-mcpm}
G.~Goraly and R.~Hassin, ``Multi-color pebble motion on graphs,''
  \emph{Algorithmica}, vol.~58, no.~3, pp. 610--636, 2010.

\bibitem{k-cpmg}
D.~Kornhauser, ``Coordinating pebble motion on graphs, the diameter of
  permutation groups, and applications,'' M.{Sc}. Thesis, Department of
  Electrical Engineering and Computer Scienec, Massachusetts Institute of
  Technology, 1984.

\bibitem{klb-ftpp13}
A.~Krontiris, R.~Luna, and K.~E. Bekris, ``From feasibility tests to path
  planners for multi-agent pathfinding,'' in \emph{Symposium on Combinatorial
  Search}, 2013.

\bibitem{prst-mpg}
C.~H. Papadimitriou, P.~Raghavan, M.~Sudan, and H.~Tamaki, ``Motion planning on
  a graph,'' in \emph{Foundations of Computer Science}, 1994, pp. 511--520.

\bibitem{yl-dofc12}
J.~Yu and S.~M. LaValle, ``Distance optimal formation control on graphs with a
  tight convergence time guarantee,'' in \emph{IEEE International Conference on
  Decision and Control}, 2012, pp. 4023--4028.

\bibitem{klps-ujr86}
K.~Kedem, R.~Livne, J.~Pach, and M.~Sharir, ``On the union of jordan regions
  and collision-free translational motion amidst polygonal obstacles,''
  \emph{Discrete {\&} Computational Geometry}, vol.~1, pp. 59--70, 1986.

\bibitem{sh-pspace14}
K.~Solovey and D.~Halperin, ``{PSPACE}-hardness of unlabeled motion planning
  and variants,'' \emph{CoRR}, vol. abs/1408.2260, 2014.

\bibitem{alt-approx92}
H.~Alt, R.~Fleischer, M.~Kaufmann, K.~Mehlhorn, S.~N{\"a}her, S.~Schirra, and
  C.~Uhrig, ``Approximate motion planning and the complexity of the boundary of
  the union of simple geometric figures,'' \emph{Algorithmica}, vol.~8, no.
  1-6, pp. 391--406, 1992.

\end{thebibliography}

\end{document}